\documentclass[a4paper,11pt]{article}
\usepackage{authblk}
\usepackage{fullpage}
\usepackage[letterpaper, margin=1in]{geometry}

\title{Efficient Stable Population Protocols for Parity and Beyond}
\author[1]{Leszek Gąsieniec}
\author[2]{Tytus Grodzicki} 
\author[2]{Tomasz Jurdzi{\'n}ski\thanks{Supported by the National Science Centre, Poland under project number 2020/39/B/ST6/03288.}}
\author[2]{\\Jakub Kowalski\thanks{Supported in part by the National Science Centre, Poland under project number  2021/41/B/ST6/03691.}} 
\author[2]{Grzegorz Stachowiak$^*$}
\affil[1]{School of Computer Science and Informatics, University of Liverpool}
\affil[2]{Institute of Computer Science, University of Wroc{\l}aw }
\date{}

\usepackage{lineno}

\usepackage{amsfonts,amsmath,amssymb,amsthm}
\usepackage{tcolorbox}
\usepackage{multicol}
\usepackage{environ}
\usepackage[symbol]{footmisc}

\usepackage[T1]{fontenc}
\usepackage{graphicx}
\usepackage[dvipsnames]{xcolor}
\usepackage[utf8]{inputenc}
\usepackage{calc}
\usepackage{thmtools}
\usepackage{thm-restate}


\newtheorem{theorem}{Theorem}
\newtheorem{lemma}{Lemma}
\newtheorem{claim}{Claim}

\newcommand{\DEBUG}[1]{{#1}} 
\newcommand{\TODO}[1]{\DEBUG{\tt{\textcolor{red}{TODO}(}{#1}{)}}}
\newcommand{\KOT}[1]{\DEBUG{\tt{\textcolor{orange}{JKO}(}{#1}{)}}}

\newcommand{\kl}{\text{\textcolor{orange}{$\blacktriangleleft$}}}

\newcommand{\mathescape}[1]{\relax\ifmmode#1\else $#1$\fi} 

\usepackage{algorithm,algorithmicx}
\usepackage[noend]{algpseudocode}
\MakeRobust{\Call}
 
\newcommand{\LineIf}[2]{\State \algorithmicif\ {#1}\ \algorithmicthen\ {#2}}
\newcommand{\indLineIf}[2]{\State \hspace{\algorithmicindent} \algorithmicif\ {#1}\ \algorithmicthen\ {#2}}
\newcommand{\DLineIf}[2]{\State \algorithmicif\ {#1}\ \algorithmicthen\ \State \hspace{\algorithmicindent} {#2}}
\newcommand{\LineIfElse}[3]{\State \algorithmicif\ {#1}\ \algorithmicthen\ {#2} \algorithmicelse\ {#3}}

\newcommand{\indLineIfElse}[3]{\State \hspace{\algorithmicindent} \algorithmicif\ {#1}\ \algorithmicthen\ {#2} \State \hspace{\algorithmicindent} \algorithmicelse\ {#3}}
\newcommand{\indindLineIfElse}[3]{\State \hspace{\algorithmicindent}\hspace{\algorithmicindent} \algorithmicif\ {#1}\ \algorithmicthen\ {#2} \State \hspace{\algorithmicindent}\hspace{\algorithmicindent} \algorithmicelse\ {#3}}

\newcommand{\indindLineIfElseElse}[5]{\State \hspace{\algorithmicindent}\hspace{\algorithmicindent} \algorithmicif\ {#1}\ \algorithmicthen\ {#2} \State \hspace{\algorithmicindent}\hspace{\algorithmicindent} \algorithmicelse\ \algorithmicif\ {#3}\ \algorithmicthen\ {#4} \State \hspace{\algorithmicindent}\hspace{\algorithmicindent} \algorithmicelse\ {#5}
}
\newcommand{\indLineIfEIEIEI}[8]{\State \hspace{\algorithmicindent} \algorithmicif\ {#1}\ \algorithmicthen\ {#2} \State \hspace{\algorithmicindent} \algorithmicelse\ \algorithmicif\ {#3}\ \algorithmicthen\ {#4} \State \hspace{\algorithmicindent} \algorithmicelse\ \algorithmicif\ {#5}\ \algorithmicthen\ {#6} \State \hspace{\algorithmicindent} \algorithmicelse\ \algorithmicif\ {#7}\ \algorithmicthen\ {#8}
}

%

\algdef{SE}[UNT]{Unt}{EndUnt}[1]{\textbf{until}\ #1\ \algorithmicdo}{\algorithmicend\ \textbf{until}}%

\definecolor{light-gray}{gray}{0.95} 
\usepackage{listings}
\lstdefinestyle{mystyle}{
    basicstyle=\ttfamily\small, 
    commentstyle=\color{gray},
    backgroundcolor=\color{light-gray},
    morecomment=[f][\color{grey}][0]{\#},
    morecomment=[f][\color{grey}][0]{//},
    numbers=left,                    
    numbersep=5pt,    
    mathescape=true,
    keywordstyle=\textbf,
    morekeywords={while,if,then,else,true,false,do,:,parallel,in,def,until},
}
\lstset{style=mystyle}

\newtheorem{fact}{Fact}

\newcommand{\transitionrel}[6][<]{& (#2) \; && #3 + #4 &&\stackrel{#1}{\longrightarrow} &&#5 + #6  &\;\\}

\newcommand{\transitionrelnull}[6][<]{& (#2) \; && #3 + #4 &&\stackrel{#1}{\longrightarrow} &&#5 \  #6  &\;\\}

\newcommand{\transition}[5]{\transitionrel[]{#1}{#2}{#3}{#4}{#5}}

\newcommand{\transitionQad}[5]{\transitionrel[]{#1}{#2}{#3}{#4}{#5\qquad\qquad\qquad\qquad\qquad\qquad}}

\newcommand{\transitionnull}[5]{\transitionrelnull[]{#1}{#2}{#3}{#4}{#5}}

\newcommand{\transitionreltwo}[6][]{& #3 + #4 &&\stackrel{#1}{\longrightarrow} &&#5 + #6  &\;\\}

\begin{document}

\maketitle

\thispagestyle{empty}

\begin{abstract}

The {\em population protocol} model studies {\em pairwise interactions} among simple, indistinguishable {\em agents}. Each agent has limited storage, represented by a single state from a fixed space. In the {\em probabilistic variant} considered here, a {\em random scheduler} selects an {\em interacting} pair uniformly at random at each step.
Beyond {\em state utilisation}, we consider the {\em stabilisation time}, defined as the number of interactions to reach a final configuration divided by the population size $n$. A protocol is {\em time-efficient} if it stabilises in polylogarithmic time, {\em space-efficient} if it uses polylogarithmic states, {\em stable} if it converges to the correct answer with probability $1$, and {\em silent} if no states change after stabilisation.


For nearly two decades, {\em population protocols} have been extensively investigated, producing efficient solutions for central problems in distributed computing, notably {\em majority} computation, one of the two primary predicates in {\em Presburger arithmetic}. Yet the complementary class of {\em congruence} predicates has remained substantially more elusive~\cite{AAD+06,DBLP:journals/dc/AngluinAER07}.
This long‑standing gap constitutes a major open challenge in the area. In particular, while time‑efficient (non‑stable) solutions exist via Monte Carlo simulation of register machines~\cite{AngluinAE08}, all known stable protocols for such predicates remain slow, exhibiting either exponential~\cite{blondin_et_al:LIPIcs.STACS.2020.40,DBLP:journals/acta/EsparzaGLM17} or highly polynomial time complexity~\cite{DBLP:journals/jcss/CzernerGHE24}.
In this paper, we present a systematic methodology that enables the design of protocols that are {\em efficient}, {\em stable} and {\em silent} for computing {\em congruence predicates}.
We first address the {\em parity problem}, where agents compute the parity of a designated sub‑population size, and then generalise the approach to congruence modulo an arbitrary integer $m$.

Our congruence 
protocols achieve efficiency through a modular construction, emphasising universality, robustness, and strong probabilistic guarantees.
While prior work focused on problem‑specific designs, our approach provides a systematic framework for efficient multi‑stage protocols.
Central to this approach is the {\em weight system}, an efficient counting mechanism based on the novel concept of {\em population weights} (or simply {\em weights}), combined with {\em majority} computation and operating via implicit conversion between unary and binary representations. The weight system is integrated within the proposed modular MC+ computing paradigm, which leverages extensive prior work on {\em clocking mechanisms}, efficient {\em anomaly detection} (under various formulations), and a {\em switching mechanism} that guarantees fallback to a slow yet always‑correct stable protocol.
The new approach yields the first efficient parity and congruence protocols using $O(\log^3 n)$ states and achieving silent stabilisation in $O(\log^3 n)$ time, addressing an open problem posed by Czerner {\em et al.} in~\cite{DBLP:journals/jcss/CzernerGHE24}.
We also discuss the impact of the implicit conversion  
on other problems, including the computation and alternative representation of (sub-)population sizes, an open problem posed by Doty and Eftekhari in~\cite{DOTY202191}.

All algorithmic solutions presented in this paper guarantee silent stabilisation in time both expected and with high probability (whp) defined as $1-n^{-\eta},$ 
for a constant $\eta>0.$
\end{abstract}


\newpage


\setcounter{page}{1}
\section{Introduction} 
The model of population protocols originates from the seminal paper \cite{AAD+06}, providing tools suitable for the formal analysis of {\em pairwise interactions} between simple, indistinguishable entities referred to as {\em agents}. These agents are equipped with limited storage, communication, and computation capabilities.
When two agents engage in a direct interaction, their states change according to the predefined {\em transition function}, which forms an integral part of the population protocol.
%
It is assumed that a protocol starts in the predefined {\em initial configuration} of agents’ states representing the input, and it {\em stabilises} in one of the {\em final configurations} of states representing the {\em solution} to the considered problem.
In~the standard {\em probabilistic variant} of population protocols adopted here,
in each step of a protocol, the {\em random scheduler} selects an ordered pair of agents: the {\em initiator} and the {\em responder}, which are drawn from the whole population uniformly at random. The lack of symmetry in this pair is a powerful source of randomness utilised by population protocols.
In the probabilistic variant, in addition to efficient {\em state utilisation}, one is also interested in the {\em time complexity}, where the {\em sequential time} refers to the number of interactions leading to the stabilisation of a protocol in a final configuration. More recently, the focus has shifted to the {\em parallel time}, or simply the {\em time}, defined as the sequential time divided by the size $n$ of the whole population. The (parallel) time reflects on the parallelism of simultaneous independent interactions of agents utilised in {\em efficient population protocols} that stabilise in time 
$O(\mbox{polylog}\,n)$.
%
%
In this paper, the rules of a transition function are denoted as $A + B \rightarrow A' + B'$, meaning that if there is an interaction initiated by any agent in state $A$, with any agent in state $B$, then the former changes its state to $A'$, while the latter changes its state to $B'$.
We say that an event occurs {\em with high probability (whp)} if its probability is at least $1 - n^{-\eta}$, for any fixed constant $\eta > 0$; equivalently, its complement occurs with {\em negligible} probability, i.e., at most $n^{-\eta}$.
All algorithmic solutions presented in this paper guarantee silent stabilisation (always correct solution in which agents do not change their states after stabilisation) in time both expected and whp.

\subsection{Related Work and Results}

Population‑protocol research largely follows two directions: analysing the behaviour of large populations with predefined interaction rules; and studying computational power of population protocols under various schedulers and memory constraints. This work belongs to the latter, focusing on designing efficient protocols with robust probabilistic guarantees.
The original population‑protocol model~\cite{AAD+06} assumes a fixed set of states an transitions, independent of the population size~$n$. However, for fundamental tasks such as {\em leader election}~\cite{DS18}, {\em majority computation}~\cite{DBLP:conf/soda/AlistarhAEGR17}, two central tools in our computing paradigm, and the evaluation of various functions and predicates~\cite{BDS17}, it is known that stable, fixed‑state significantly faster than linear protocols are not feasible. This limitation has motivated models in which the number of states may grow with~$n$, enabling efficient solutions.

Much of the extensive literature on efficient stable population protocols has focused on solving individual problems. Nearly all protocols proposed in this context exhibit the property that they may fail with negligible probability, while otherwise producing correct outputs. In the event of such a failure, a standard and widely used remedy is to transition to a slower but stable backup protocol, see, e.g., the literature on leader election and majority computation discussed in Section~\ref{s:tools}. 

We take this idea a step further by elevating these properties to a defining characteristic of efficient population protocols, which we term MC+ protocols. This paradigm enables the repeated composition of successive efficient, stable, and silent protocols, thereby providing a systematic framework for constructing efficient multi‑stage population protocols with these properties, see Section~\ref{s:MC+}.

\paragraph{Our Results}

We propose a modular framework for designing efficient and stable congruence protocols, centred around a novel counting mechanism that combines {\em population weights} with {\em majority computation} embedded within the MC+ computing paradigm leveraging prior work on clocking, anomaly detection, and a switching mechanism that guarantee fallback to stable and silent protocols (Section~\ref{s:MC+}).
This framework yields the first efficient, stable and silent protocols for both parity (Theorem~\ref{th:parity}) and congruence (Theorem~\ref{th:congruence}), each using $O(\log^3 n)$ states and stabilising in $O(\log^3 n)$ time, both in expectation and with high probability (Section~\ref{s:results}), thereby addressing an open problem posed by in~\cite{DBLP:journals/jcss/CzernerGHE24}. 
We conclude (Section~\ref{s:conclusion}) by discussing the impact of implicit unary‑to‑binary conversion via the weight system and its applications to other problems, including (sub‑)population size computation and representation, an open problem posed in~\cite{DOTY202191}. 





\subsection{Overview of Tools}\label{s:tools}

In this section, we present an overview of the key concepts underpinning both the proposed computing paradigm and the algorithmic solutions.

\paragraph{Epidemic Process}
A \emph{(one-way) epidemic process}~\cite{AngluinAE08} is a simple yet effective protocol 
in which each agent can be in one of two possible states, denoted as $0$ (\emph{susceptible}) or $1$ (\emph{infected}), with the transition function consisting of a single meaningful rule
$(x) + (y) \rightarrow (x) + (\max\{x, y\})$.

\begin{lemma}[Lemma 2 in \cite{AngluinAE08}]
    The time required for all agents to become infected, starting from a configuration with a single infected agent, is $\Theta(\log n)$ whp.
\end{lemma}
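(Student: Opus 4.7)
The plan is to analyse the Markov chain $I_t$ counting infected agents after $t$ sequential interactions, where $I_0 = 1$ and $I_{t+1} - I_t$ equals $1$ with probability $p_t := I_t(n-I_t)/(n(n-1))$ and $0$ otherwise; parallel time equals $t/n$. A direct computation gives expected sequential saturation time $\sum_{I=1}^{n-1} 1/p_I = n(n-1)\sum_{I=1}^{n-1} \frac{1}{I(n-I)} = \Theta(n\log n)$ via the partial fractions $\frac{1}{I(n-I)} = \frac{1}{n}\bigl(\frac{1}{I}+\frac{1}{n-I}\bigr)$ and the harmonic sum, so expected parallel time is $\Theta(\log n)$. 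The real work is promoting this expectation bound to a high-probability bound in both directions.

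For the upper bound, I would split the evolution into a \emph{growth} phase ($I_t \le n/2$) and a symmetric \emph{saturation} phase ($n - I_t \le n/2$), and further subdivide each into sub-phases indexed by milestones $2^k$ for $k = 0, 1, \dots, \lceil \log(n/2) \rceil$. While $I \in [2^k, 2^{k+1})$, every interaction infects a new agent with probability at least $2^k/(2n)$ (using $n - I \ge n/2$), so over a window of $\Theta(n)$ consecutive interactions the number of new infections stochastically dominates a binomial with mean $\Theta(2^k)$, and a Chernoff bound shows it exceeds $2^k$ with probability $1 - e^{-\Omega(2^k)}$, enough to cross the next milestone. Summing the sub-phase lengths across $O(\log n)$ milestones and taking a union bound yields $O(\log n)$ parallel time whp. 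The saturation phase is handled identically by running the same argument on the susceptible count $n - I_t$, which evolves under a mirror-image rule.

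For the lower bound, I would couple $I_t$ with a pure-birth (Yule-type) process $J_t$ in which the factor $(n - I_t)/(n-1)$ in the infection probability is replaced by $1$; then $J_t$ pointwise dominates $I_t$. Standard analysis of $J_t$ --- whose inter-birth gaps are independent geometrics with means $n/I$ --- gives that $J_t$ first reaches $n/2$ only after sequential time $\Omega(n \log n)$ with probability $1 - n^{-\Omega(1)}$, via a moment-generating-function concentration inequality for sums of geometric random variables. Transferring the bound to $I_t$ gives parallel time at least $c \log n$ whp, for a constant $c > 0$.

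The main obstacle is the small-$I$ regime of the upper bound: when $2^k = O(\log n)$, a vanilla Chernoff bound on $\mathrm{Bin}(\Theta(n), 2^k/n)$ is too weak. Here one must either fall back on the tail of a sum of geometrics directly --- using that $\mathrm{Geom}(\mu^{-1})$ exceeds $C \mu \log n$ only with probability $n^{-\Omega(C)}$, so the first $O(\log n)$ infections contribute additional $O(\log n)$ parallel time whp --- or use an alternative coupling (e.g., with a branching process) to absorb the initial slack into the overall $O(\log n)$ budget.
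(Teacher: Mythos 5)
First, note that the paper does not prove this statement at all: it is imported verbatim as Lemma~2 of \cite{AngluinAE08} (the one-way epidemic bound), so there is no in-paper proof to compare yours against. Judged on its own, your argument is the standard one and is essentially sound: the Markov chain with per-step infection probability $I(n-I)/(n(n-1))$, the partial-fraction computation giving expected sequential time $\Theta(n\log n)$, the doubling-milestone analysis for the mid-range, the symmetric treatment of the susceptible count, and the domination argument for the lower bound are all correct.

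The one place where the write-up as literally stated does not close is the one you flag yourself: for milestones with $2^k=O(\log n)$, bounding each waiting time $\mathrm{Geom}(p_I)$ separately by $C\mu_I\log n$ and union-bounding costs $\sum_{I\le \log n} C(n/I)\log n = \Theta(n\log n\log\log n)$ sequential steps, i.e.\ $\Theta(\log n\log\log n)$ parallel time, which overshoots the target. The correct repair is the tool you already invoke for the lower bound: a moment-generating-function (Janson-type) tail bound for the \emph{sum} of independent geometrics rather than per-term bounds. Over the whole trajectory the minimum success probability is $p_{\min}=\Theta(1/n)$ and the expected total is $\mu=\Theta(n\log n)$, so $p_{\min}\mu=\Theta(\log n)$ and $\Pr\bigl[\sum_I G_I\ge \lambda\mu\bigr]\le e^{-p_{\min}\mu(\lambda-1-\ln\lambda)}=n^{-\Omega(\lambda)}$, which yields the $O(\log n)$ parallel-time upper bound whp in one stroke and in fact makes the milestone decomposition unnecessary. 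With that substitution the proof is complete. For the lower bound, an even lighter route than the Yule coupling is to note $\mathbb{E}[I_{t+1}\mid I_t]\le I_t\bigl(1+\tfrac{1}{n-1}\bigr)$, hence $\mathbb{E}[I_t]\le e^{t/(n-1)}$, and Markov's inequality shows the epidemic is still incomplete at parallel time $\tfrac12\ln n$ with probability $1-O(n^{-1/2})$.
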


The epidemic process establishes a fundamental $\Theta(\log n)$-time communication bound in population protocols.
It lies at the core of numerous algorithmic tools and solutions in population protocols, including phase clocks~\cite{DBLP:conf/soda/AlistarhAG18,AngluinAE08,DBLP:journals/jacm/GasieniecS21,DBLP:conf/focs/DotyEGSUS21,DBLP:conf/stacs/GasieniecSS23} and, in turn, efficient leader election~\cite{DBLP:conf/icalp/AlistarhG15,DBLP:conf/podc/BilkeCER17,DBLP:conf/soda/AlistarhAG18, DBLP:conf/soda/BerenbrinkKKO18,DBLP:journals/jacm/GasieniecS21,DBLP:conf/spaa/GasieniecSU19, DBLP:conf/stoc/BerenbrinkGK20}, majority computation~\cite{alistarh2015fastExactMajority,DBLP:conf/soda/AlistarhAEGR17,DBLP:conf/soda/AlistarhAG18,DBLP:conf/podc/BilkeCER17,MocquardAABS2015,mocquard2016optimal,Berenbrink_2020,ben2020log3,DBLP:conf/focs/DotyEGSUS21}, and others. Recent work~\cite{GKL+24} introduces the $k$-contact epidemic model, requiring $k$ distinct infected contacts for infection.

\paragraph{Efficient Leader Election}
A central challenge in distributed computing is symmetry breaking, typically achieved by electing one agent as {\em the leader}.
In the context of population protocols, the results in~\cite{DBLP:conf/wdag/ChenCDS14, DBLP:conf/soda/Doty14} laid the foundation for proving that leader election cannot be solved in sublinear time using a fixed number of states~\cite{DS18}.
Alistarh and Gelashvili~\cite{DBLP:conf/icalp/AlistarhG15} then introduced an upper bound with a protocol stabilising in $O(\log^3 n)$ time using $O(\log^3 n)$ states per agent.
Later, Alistarh {\em et al.}~\cite{DBLP:conf/soda/AlistarhAEGR17} explored general trade-offs between state and time complexity, establishing a separation between {\em slowly stabilising} protocols using $o(\log\log n)$ states and {\em rapidly stabilising} ones requiring $\Omega(\log\log n)$ states.
Subsequent advances improved upper bounds:
Bilke {\em et al.}~\cite{DBLP:conf/podc/BilkeCER17} achieved $O(\log^2 n)$ time whp with $O(\log^2 n)$ states;
Alistarh {\em et al.}~\cite{DBLP:conf/soda/AlistarhAG18} and Berenbrink {\em et al.}~\cite{DBLP:conf/soda/BerenbrinkKKO18} reduced states to $O(\log n)$ while preserving $O(\log^2 n)$ time whp using synthetic coins;
G\k asieniec and Stachowiak~\cite{DBLP:journals/jacm/GasieniecS21} further lowered state usage to $O(\log\log n)$ with the same time bound whp;
G\k asieniec {\em et al.}~\cite{DBLP:conf/spaa/GasieniecSU19} traded high probability for expected $O(\log n \log\log n)$ time;
and finally, Berenbrink {\em et al.}~\cite{DBLP:conf/stoc/BerenbrinkGK20} achieved the optimal expected $O(\log n)$ time.
Related work in graphical population protocols can be found in~\cite{Ryb1}.

In our time-efficient protocols, the elected leader initiates weight creation and coordinates rounds and synchronisation. Our leader-election mechanism builds on coin-flip reduction across successive rounds~\cite{AngluinAE08, DBLP:journals/jacm/GasieniecS21, DBLP:conf/stoc/BerenbrinkGK20}, with round numbering ensuring at least one leader survives. In the rare case that multiple leaders remain, anomaly detection resolves the conflict in expected $O(n)$ time (Section~\ref{leaderelection}).

\paragraph{Weight System}
Let $A$ and $B$ be subpopulations of $n$ agents. The \emph{majority} problem, a predicate from Presburger Arithmetic, is to determine whether $|A|>|B|$, $|A|<|B|$, or $|A|=|B|$.
For efficient majority computation, Angluin et al.~\cite{AngluinAE08} introduced an \emph{approximate majority} protocol: if the initial bias $|A|-|B| = \omega(\sqrt{n}\log n)$, it converges to the majority opinion in $O(\log n)$ time whp. Condon et al.~\cite{condon2020approximate} improved the required bias to $\Omega(\sqrt{n\log n})$.
For the {\em exact majority}, Alistarh et al.~\cite{alistarh2015fastExactMajority} gave the first stable polylogarithmic-time protocol using $\Omega(n)$ states. Subsequent work~\cite{DBLP:conf/soda/AlistarhAEGR17,DBLP:conf/soda/AlistarhAG18,DBLP:conf/podc/BilkeCER17,MocquardAABS2015,mocquard2016optimal,Berenbrink_2020,ben2020log3}
improved the space-time trade-off, culminating in an optimal $O(\log n)$-time protocol with $O(\log n)$ states~\cite{DBLP:conf/focs/DotyEGSUS21}. This matches the $\Omega(\log n)$ state lower bound of Alistarh et al.~\cite{DBLP:conf/soda/AlistarhAG18} for stable sublinear-time protocols under \emph{monotonicity} and \emph{output dominance}, both satisfied by~\cite{DBLP:conf/focs/DotyEGSUS21}.
A related line of research on majority computation within graphical population protocols is presented in~\cite{Ryb1,Ryb2}.
Non-stable protocols (allowing positive error probability) achieve faster convergence whp with less memory. Berenbrink et al.~\cite{Berenbrink_2020} gave a protocol with $O(\log \log n)$ states and $O(\log^2 n)$ time. Also, Kosowski and Uznański~\cite{DBLP:conf/podc/KosowskiU18} achieved $O(1)$ states and $O(\mathrm{polylog}\, n)$ time whp.

In this paper, we use exact majority computation as a conceptual {\em balance scale} based on {\em system of weights} formed of disjoint subpopulations $W_i$ of agents, where $|W_i|=2^i,$ for all $i,\dots,\log(\frac n6).$  
This enables comparison of sub-population cardinalities through appropriately chosen weight combinations. 
Like other majority protocols, ours also uses \emph{cancelling} (opposite biased agents $\to$ unbiased) and \emph{splitting} (biased $\to$ two biased), doubling the bias per phase until $n$. 
For details, see Section~\ref{majority}.

\paragraph{Clocks and Synchronisation}

Efficient population protocols typically use {\em phase clocks} to synchronise computation. 
Most such clocks are designed to count $\Theta(\log n)$ parallel time, which, assisted by epidemic propagation, is sufficient to fully synchronise all agents in the population. 
Leader- or junta-based phase clocks operate with a fixed number of states~\cite{AngluinAE08,DBLP:journals/jacm/GasieniecS21}, 
and when nested, they can count any polylogarithmic duration. 
An alternative class consists of leaderless clocks that utilise $\Theta(\log n)$ states~\cite{DBLP:conf/soda/AlistarhAG18,DBLP:conf/focs/DotyEGSUS21}, 
enabling approximate parallel-time counting either as fixed-resolution clocks~\cite{DBLP:conf/focs/DotyEGSUS21} or as oscillators~\cite{DBLP:conf/soda/AlistarhAG18}. 
A more recent clock type, introduced in~\cite{DBLP:conf/stacs/GasieniecSS23}, supports counting $\Theta(n\log n)$ parallel time using a fixed number of states, 
leveraging either leaders or network connections in the network constructor model.

In this paper, synchronisation is achieved via round counting by each agent, reinforced by leader-attributed fine-grained interaction counting and epidemic waves.
This precise time-stamping system enables protocols with efficient {\em anomaly detection} through execution-time inconsistencies, essential for the MC+ protocols. For details, see Section~\ref{s:clock}.


\paragraph{MC+ Paradigm}
While fallback mechanisms for efficient stable population protocols are well explored in the literature, including for {\em leader election}~\cite{DBLP:conf/icalp/AlistarhG15,DBLP:conf/podc/BilkeCER17,DBLP:conf/soda/AlistarhAG18, DBLP:conf/soda/BerenbrinkKKO18,DBLP:journals/jacm/GasieniecS21,DBLP:conf/spaa/GasieniecSU19, DBLP:conf/stoc/BerenbrinkGK20} and {\em majority computation}~\cite{alistarh2015fastExactMajority,DBLP:conf/soda/AlistarhAEGR17,DBLP:conf/soda/AlistarhAG18,DBLP:conf/podc/BilkeCER17,MocquardAABS2015,mocquard2016optimal,Berenbrink_2020,ben2020log3,DBLP:conf/focs/DotyEGSUS21}, no clean “off‑the‑shelf” universal solution for the stable (and silent) composition of multiple efficient protocols is currently known.
To address this gap, we propose the MC+ computing paradigm, for detail see Section~\ref{s:MC+}, which builds on this prior work and exhibits three key properties: {composability}, robust error detection with guaranteed recovery, and efficient execution. Together, these features provide a practical framework for constructing complex, yet easier‑to‑design and analyse, efficient stable (and silent) protocols.

In this paper, the MC+ paradigm is applied to the parity problem to realise Algorithm~\ref{alg:parity} and is subsequently extended to the congruence problem within the population protocol framework in Algorithm~\ref{alg:congruence}. While the construction relies on several key components of the MC+ paradigm, including leader election, clocking and fallback mechanisms, the central novelty lies in the effective integration of the proposed weight system with majority computation. This combination forms the core counting mechanism, enabling efficient and coordinated computation within the protocol.

\subsection{Further Context}
In the population protocol model, integer values are represented in unary format, where each value corresponds to the cardinality of a subpopulation of agents in a designated state. 
This representation makes addition straightforward, i.e., efficient and stable. 
While it is well understood that fixed-state Monte Carlo population protocols can simulate a limited variant of {\em register machines} (recognising problems in LOGSPACE)~\cite{AngluinAE08}, 
the efficient and stable computation of basic operations such as subtraction (multiplication and division) remains significantly more challenging.
The primary obstacle in subtraction lies in the need for effective pairing of agents from two subpopulations, which may require linear time when their cardinalities are comparable or equal. 
A similar challenge arises in majority protocols, where it is addressed by amplifying the difference (bias) between the cardinalities to a magnitude that population protocols can process efficiently~\cite{AngluinAE08}. 
In the same paper, the authors propose a method for efficient Monte Carlo type subtraction by iteratively generating powers of two, combined with simple addition and more sophisticated majority protocols. 
This approach serves as a prototype for our {\em system of weights}, which
introduces an implicit binary representation of integer values, supported by corresponding operations. 
%






To streamline the presentation, we assume each subpopulation (input, temporary, output) is at most $\frac{n}{6}$; our protocols remain correct and efficient for larger fractions. We also assume that a constant-factor approximation $LOG$ of $\log n$ is provided and built into the transition function.
A recent work~\cite{DBLP:journals/jcss/CzernerGHE24} constructs succinctly encoded, slow (polynomial-time) population protocols for Presburger arithmetic, including modular congruences. Since~\cite{DBLP:journals/jcss/CzernerGHE24} does not address silence, we include also 
our generic polynomial congruence protocol.

\section{The Parity Problem}\label{s:parity}

Recall that in the {\em parity problem}, agents are tasked with computing the parity of a given sub-population size. In this work, we present a stable population protocol that solves the parity computation problem in polylogarithmic time. 
%
We begin by presenting a fixed-state protocol that computes the parity of a given subpopulation $X$.
This protocol serves as the fallback mechanism and stabilizes in $O(n \log n)$ time, both in expectation and whp.

\subsection{Slow Stable Parity Protocol}
\label{slow-parity}
The slow protocol builds on the leader election protocol of~\cite{AAD+06}. It uses four states: ${L_0, L_1, F_0, F_1}$, with $L_x$ denoting \emph{leaders} and $F_x$ \emph{followers}.
Initially, all agents in subpopulation $X$ start in $L_1$; all others start in $F_0$.
In the interaction of two leaders, the rule
$L_x+L_y\rightarrow L_{(x+y)\bmod 2}+F_0$
applies, where one remains a leader (with updated parity), and the other becomes a follower.
Eventually, a single leader remains, encoding the parity of $|X|$ in its state index. This is then propagated via
rule $L_x + F_y\rightarrow L_x + F_x$, 
so all followers adopt the leader’s index.

\begin{lemma}\label{l:sparity}
    The slow protocol computes parity and stabilises silently in expected time $O(n\log n)$.
\end{lemma}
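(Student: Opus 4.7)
The plan is to analyse the protocol in two stages: a leader-reduction stage and a follower-propagation stage, establishing correctness through a conserved quantity and bounding the expected parallel time of each stage.

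First I would verify correctness by identifying the key invariant: the sum of the indices of all leaders, taken modulo $2$. Under the rule $L_x + L_y \to L_{(x+y)\bmod 2} + F_0$, two leader indices $x,y$ collapse into a single index $x+y\bmod 2$, preserving the total modulo~$2$; the rule $L_x + F_y \to L_x + F_x$ leaves the multiset of leader indices untouched. Initially every agent in $X$ is an $L_1$ and no other agent is a leader, so the invariant equals $|X|\bmod 2$. Since the $L+L$ rule is the only one that decreases the number of leaders and remains enabled whenever at least two leaders coexist, the process must reach a configuration with exactly one leader, whose index is therefore $|X|\bmod 2$.

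Next I would argue silent stabilisation. Once a single leader $L_x$ is present, the rule $L_x + F_y \to L_x + F_x$ can still fire whenever the leader meets a follower with the wrong index. After all such followers have been converted, the configuration consists of one $L_x$ and $n-1$ copies of $F_x$; the $L+L$ rule is inapplicable, and the remaining $L+F$ rule degenerates to the self-loop $L_x + F_x \to L_x + F_x$, so no further state changes occur.

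For the running time I would decouple the two stages. Let $T_1$ denote the parallel time until a single leader remains. With $k$ leaders present, an $L+L$ interaction occurs in a sequential step with probability $k(k-1)/(n(n-1))$, so the expected parallel waiting time until the next merge is at most $(n-1)/(k(k-1))$. Summing from $k=|X|$ down to $k=2$ and telescoping $\frac{1}{k(k-1)} = \frac{1}{k-1}-\frac{1}{k}$ gives $E[T_1] = O(n)$. Let $T_2$ denote the additional parallel time until every non-leader is in state $F_x$, starting from an arbitrary follower configuration. The leader meets any specific other agent in a sequential step with probability $2/(n(n-1))$, so a standard coupon-collector argument over the at most $n-1$ followers still needing conversion gives $E[T_2] = O(n\log n)$. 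Since the full stabilisation time is bounded by $T_1 + T_2$, the expected parallel stabilisation time is $O(n\log n)$.

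The step I expect to need the most care is the clean decoupling of the two stages: during stage one, follower states already fluctuate as leaders come and go, so the coupon-collector analysis in stage two must be insensitive to the follower configuration at the moment the unique leader emerges. This is exactly why I define $T_2$ to start from the (adversarially chosen) configuration reached right after the last leader merge, so that $E[T_1] + E[T_2]$ is a valid upper bound regardless of the interleaving of leader merges and follower updates.
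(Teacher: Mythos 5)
Your proof is correct and follows essentially the same two-stage decomposition as the paper's: reduce to a single leader in expected $O(n)$ time, then propagate its index by a coupon-collector argument in expected $O(n\log n)$ time. The only difference is that you work out the $O(n)$ leader-reduction bound explicitly via the telescoping sum and spell out the mod-2 conservation invariant, whereas the paper cites the leader election analysis of Angluin et al. and states the parity encoding without detail.
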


\begin{proof}
    The leader election protocol, obtained by removing indices from the parity protocol, is stable and stabilises in expected $O(n)$ time~\cite{AAD+06}. The parity of $|X|$ is encoded in the final leader’s state index: $x \equiv |X| \pmod{2}$.
By a Coupon Collector argument, propagating this information to all agents takes expected $O(n \log n)$ time. Thus, the full protocol runs in expected $O(n \log n)$ time.
\end{proof}

\subsection{Efficient Parity Protocol}

In this section, we provide an outline of the first efficient, stable, and silent parity protocol (Algorithm~\ref{alg:parity}).

\begin{algorithm}[bth!]
\label{parity-outline}
\begin{algorithmic}[1]
\Require{~$X$, a set of agents}
\Ensure{\textsc{True} if $|X|$ is even, otherwise \textsc{False}}
\Function{Parity}{$X$}
\State $\Call{LeaderElection}{}$
\State $\Call{WeightCreation}{X}$
\State $Y \gets \emptyset$
\For{$\ell\gets LOG$ \textbf{downto} $1$} 
  \State{Add weight $2^\ell$ to $Y$}
  \State{Compare $X$ and $Y$ using the $\Call{Majority}{}$ protocol}
  \LineIf{$|X|<|Y|$}{Remove weight $2^\ell$ from $Y$}
\EndFor
\LineIfElse{$|X|=|Y|$}{\Return \textsc{True}} {\Return \textsc{False}}
\EndFunction
\end{algorithmic}
\caption{$\Call{Parity}{}$ Protocol}\label{alg:parity}
\end{algorithm}

\paragraph{Balance scale}
The \emph{balance scale} idea stems from recomputation of powers of 2, 
as proposed in~\cite{AngluinAE08}. 
Equipped with polylogarithmic memory we can afford initial construction of subpopulations $W_i$ of cardinality $2^i,$ for all $i=0,1,\dots,\lfloor \log \frac{n}{3} \rfloor$. These subpopulations serve as weights in balance scales 
implemented via efficient majority protocol.
Using subpopulation $W_i$, we can determine whether $|X| - |Y| \ge 2^i$ by computing exact majority (with tie detection) between $|X|$ and $|Y|+ |W_i|$. 
This balancing mechanism enables efficient computation of parity and size of any subpopulation.


Implementing this idea as a Monte Carlo protocol is rather straightforward; the challenge lies in designing mechanisms to ensure stable and silent convergence to the correct solution.
This paper implements the weight-based parity computation via a Monte Carlo protocol using $\Call{LeaderElection}{}$, $\Call{WeightCreation}{}$, and $\Call{Majority}{}$ subprotocols. 
The $\Call{LeaderElection}{}$ protocol is efficient and is required for the correct execution of the other two subprotocols.
The $\Call{WeightCreation}{}$ protocol creates weights $W_i$ and the $\Call{Majority}{}$ decides whether $|X|>|Y|,|X|<|Y|$ or $|X|=|Y|$, for any subpopulations $X$ and $Y$.
Our efficient protocol either stabilises silently on the correct parity or detects an anomaly under unusual behavior. In the latter case, it invalidates its result and triggers the slow parity protocol.
In Section~\ref{sec:mc+}, we develop the tools required to prove the following theorem.

\begin{theorem}\label{th:parity}
There exists an efficient stable population protocol which solves the parity problem utilising $O(\log^3 n)$ states and stabilises silently in expected $O(\log^3 n)$ time.
\end{theorem}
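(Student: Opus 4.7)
The plan is to decompose the theorem into (i) correctness of the outer greedy loop of Algorithm~\ref{alg:parity} treated as an idealised sequential procedure, and (ii) the MC+ properties of the three subprotocols it invokes.

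For (i), observe that $\ell$ ranges only from $LOG$ down to $1$, skipping $0$, so $Y$ is always a sum of distinct powers from $\{2^1,\ldots,2^{LOG}\}$ and hence has even cardinality throughout. A standard greedy argument shows that after the loop $|Y|$ equals the largest such even value at most $|X|$. Since $|X|\le n/6$ and $LOG$ overestimates $\log n$ by at most a constant factor, every even integer in $[0,|X|]$ is representable as such a sum, so $|Y|=|X|$ if $|X|$ is even, and otherwise $|Y|=|X|-1$. The final comparison $|X|=|Y|$ therefore returns the correct parity.

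For (ii), I would plug in the bounds that the remainder of the paper establishes for each subprotocol: leader election, weight creation, and each invocation of majority are MC+ protocols using $O(\log^3 n)$ states and running in polylogarithmic parallel time whp, with any deviation flagged by anomaly detection. The outer loop performs $O(\log n)$ sequential majority calls, and because each iteration reuses the same compound state components (only the bitmask indicating which weights are currently in $Y$ grows), the overall state space remains $O(\log^3 n)$; summing the parallel times of one leader election, one weight-creation phase, and $O(\log n)$ majority rounds yields the claimed $O(\log^3 n)$ total time whp.

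Stability, silence, and the expected-time bound then follow from the MC+ switching mechanism introduced in Section~\ref{sec:mc+}. If no subprotocol raises an anomaly, all agents become silent in a configuration encoding the correct parity. Otherwise, the agents abandon the fast path and invoke the slow protocol of Lemma~\ref{l:sparity}, which always terminates silently in expected $O(n\log n)$ time. Since an anomaly occurs with probability at most $n^{-\eta}$, the fallback contributes $n^{-\eta}\cdot O(n\log n)=o(1)$ to the expected running time, preserving the $O(\log^3 n)$ bound. The main obstacle will be clean compositionality: we must certify that $O(\log n)$ sequential MC+ invocations accumulate neither undetected clock drift nor residual intermediate states, and that each majority call begins from a properly synchronised configuration; this is precisely what the weight-based clocking and anomaly machinery of Section~\ref{sec:mc+} are engineered to guarantee.
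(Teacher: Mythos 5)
Your proposal follows essentially the same route as the paper: compose the leader-election, weight-creation, and majority subprotocols into a single MC+ protocol via Theorem~\ref{th:combine_mc+}, then couple the result with the slow protocol of Lemma~\ref{l:sparity} via Theorem~\ref{th:stable+mc+}; you additionally spell out the correctness of the greedy loop, which the paper leaves implicit, though you are less explicit about the state-count breakdown (leader clock $O(\log^3 n)$, non-leader clock $O(\log^2 n)$, etc.). The one imprecision is the step $n^{-\eta}\cdot O(n\log n)=o(1)$, which only holds if $\eta>1$ and also omits the expected time to \emph{detect} the anomaly (e.g.\ $O(n)$ for duplicate leaders); the paper's Theorem~\ref{th:stable+mc+} handles both by explicitly tuning the anomaly probability $n^{-a}$ against the $n^{a}$ detection-plus-fallback time.
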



The proof of Theorem~\ref{th:parity} is provided in Subsection~\ref{s:fullparity}.





\section{Efficient Protocols with Anomaly Detection}\label{sec:mc+}

In this section, we develop methods for constructing efficient population protocols that silently stabilise with certainty. 
Specifically, assuming the existence of two protocols: 
a slow, state-efficient protocol that stabilises silently in polynomial time with probability 1, and 
an efficient Monte Carlo type protocol possibly composed of subprotocols, we show how to integrate these two to obtain a protocol that is both efficient in time and guaranteed to stabilise silently.

Monte Carlo protocols typically aim to compute the desired solution with high probability, but the correctness of the solution remains uncertain after execution. Existing mechanisms for detecting fast protocol failures and transitioning to a consistently stabilising protocol have been developed on an ad hoc basis. 
In contrast, we propose a novel universal anomaly-detection mechanism that effectively integrates multiple Monte Carlo subprotocols with a reliable, though slower, always-correct solution. Specifically, this mechanism identifies efficiently scenarios necessitating a shift to a slower, always-stabilising protocol and facilitates the creation of complex efficient population protocols by sequentially combining simpler ones.
Additionally, the constructed protocols are silent.


Our efficient protocols, requiring only $  O(\mathrm{polylog}\, n)  $ time and space, converge to the correct output with high probability or reliably signal potential errors through a systematic anomaly-detection mechanism that enables fallback to guaranteed correctness.
Such a signal is treated as a failure of the efficient protocol, even when it does not indicate an actual error.
Our universal strategy is to first run an efficient Monte Carlo protocol that stabilises with high probability or switches to a slower, always-stabilising fallback upon detecting an anomaly.

\subsection{Clock Protocol}\label{s:clock}

Central to our design is the clock protocol, which acts as a clocking device and a command counter, tracking algorithmic progress and synchronising all agents to the same point in time. In rare cases of desynchronisation, it signals an anomaly.
Each main (e.g., parity) protocol is built by sequentially composing subprotocols, each spanning one or more rounds.
The clock counts down $R = \text{poly}\log(n)$ rounds of the main protocol, each taking $O(\log n)$ time. To measure a round’s duration, a leader-maintained interaction counter tracks $c \log n$ interactions, triggering the transition to the next round.
As in the rest of this paper, we do not focus on the best possible optimisation of the clock protocol, but on using the simplest possible mechanisms to simplify presentation. 



\paragraph{Components of the clock protocol}
\begin{itemize}
    \item Each agent maintains a round number $r$.
    \item Leaders maintain an auxiliary counter $s$ to count interactions within a round. After $c\cdot LOG$ interactions, they increment $r$ by 1, advancing to the next round until $r = R$.
    \item If two agents with round numbers $r$ and $r-1$ interact, the agent with $r-1$ adopts $r$.
    \item If two agents' round numbers differ by more than 1, a clock anomaly is signaled.
\end{itemize}


\begin{lemma}\label{l:clock}
For any constant $C > 0$, there exists a constant $c > 0$ such that counting $c \log n$ interactions takes at least $C \log n$ time whp.
Moreover, there exists a constant $C_0 > 0$ such that, within time $C_0 \log n$ after the first leader enters round $r$, all other agents advance to round $r$ whp. 
\end{lemma}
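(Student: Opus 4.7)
The plan is to handle the two claims separately, both by reducing to standard concentration facts already used in the paper.

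For the first claim, fix a leader $L$. At each sequential step the scheduler picks an ordered pair uniformly at random, so the probability that $L$ is one of the two chosen agents is $p = (2n-1)/n^2 \le 2/n$. Hence the number of interactions in which $L$ participates during $Cn\log n$ sequential steps (i.e.\ $C\log n$ parallel time) is a sum of independent indicators whose mean is at most $2C\log n$. By a multiplicative Chernoff bound,
\[
\Pr\!\left[\#\text{leader interactions} > (1+\delta)\cdot 2C\log n\right] \;\le\; \exp\!\left(-\tfrac{\delta^2}{3}\cdot 2C\log n\right),
\]
which is at most $n^{-\eta}$ whenever $\delta \ge \sqrt{3\eta/(2C)}$. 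Choosing $c = (1+\delta)\cdot 2C$ therefore guarantees that fewer than $c\log n$ leader-participating interactions occur within time $C\log n$, so the counter cannot hit $c\log n$ before that. This is the only quantitative step, and is routine given the Chernoff machinery already invoked elsewhere in the paper.

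For the second claim, I will use the epidemic lemma (Lemma~2 of \cite{AngluinAE08}) cited earlier. At the moment the first leader increments its round value to $r$, the absence of a clock anomaly up to that point forces every other agent to hold round value $r-1$ (anything smaller would already differ by more than one from some other agent that had been advanced earlier and would have triggered the anomaly rule). The rule ``an agent with $r-1$ interacting with an agent with $r$ adopts $r$'' is precisely the one-way epidemic update $(x)+(y)\to (x)+\max\{x,y\}$ restricted to the two values $\{r-1,r\}$. Thus the spread of the value $r$ through the population, starting from one ``infected'' agent, dominates stochastically the standard epidemic process, whose completion time is $O(\log n)$ whp. Selecting $C_0$ to be the implicit constant in that bound yields the claim. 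Note that other leaders may also advance themselves to $r$ by their own counters during this interval; this only accelerates the spread and does not affect the upper bound.

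The main subtlety — really the only nontrivial point — is the justification that at the start of the epidemic all non-leader agents are in round $r-1$ rather than some smaller value. This is not a fact internal to this lemma but follows by induction on $r$: the base case $r=1$ is the initial configuration, and the inductive step is exactly the statement being proved for $r-1$, combined with the lower bound from the first part of the lemma applied with any $C > C_0$ (so that the epidemic for round $r-1$ completes before any leader can count out the next round). I would therefore phrase the two parts in a mutually reinforcing way: the first part supplies the slack needed so that the second part starts from a configuration where all agents hold value $r-1$, and then the epidemic bound closes the loop.
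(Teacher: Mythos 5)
Your proposal follows essentially the same route as the paper's proof: a Chernoff upper-tail bound on the number of interactions a fixed agent experiences in $C\log n$ parallel time for the first claim, and the one-way epidemic bound of Lemma~2 in \cite{AngluinAE08} for the second. The only difference is that you make explicit (via induction on $r$) the fact that the epidemic for round $r$ starts from a configuration in which all other agents hold $r-1$, a point the paper's one-line proof leaves implicit; this is a harmless and arguably welcome elaboration, not a different argument.
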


\begin{proof}
Each agent experiences on average $2C \log n$ interactions in time $C \log n$. By the Chernoff bound, this number does not exceed $c \log n$ whp, for sufficiently large $c$. The epidemic time to propagate round $r$ is at most $C_0 \log n$ whp, for some $C_0 > 0$, ensuring all agents adopt round $r$ within this time.
\end{proof}

In the clock protocol, we choose $c > 0$ such that counting $c \log n$ interactions takes time $C \log n$ whp, ensuring that the interval $(C - C_0) \log n$ suffices for all round-$r$ computations whp. By Lemma~\ref{l:clock}, all agents occupy round $r$ simultaneously for at least $(C - C_0) \log n$ time whp. Also by Lemma~\ref{l:clock}, round numbers differ by at most 1 across any two agents throughout all $R$ rounds whp. Thus, no clock anomaly is detected whp.


\begin{lemma}\label{l:clocktime}
    The expected time for the clock to terminate after $R$ rounds, either by completion or anomaly detection, is $O(R \log n)$.
\end{lemma}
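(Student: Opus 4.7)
The plan is to bound the expected duration of each round separately by $O(\log n)$ and then sum over the $R$ rounds using linearity of expectation. The possibility of early termination via anomaly detection only helps, since it can only stop the process sooner than the worst-case bound.

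First I would analyse a single round $r$. Fix any leader still counting interactions in round $r$. At each sequential step, this particular leader is involved in the selected ordered pair with probability exactly $2/n$ (summing over the cases of initiator and responder among $n(n-1)$ ordered pairs). Consequently, the number of sequential steps until the leader accumulates $c\,LOG$ own-interactions is a negative-binomially distributed random variable with mean $\Theta(n \log n)$, i.e.\ $O(\log n)$ parallel time in expectation; a standard Chernoff/negative-binomial tail bound yields the same bound whp. Adding the epidemic propagation of the new round number to the rest of the population contributes another $O(\log n)$ expected parallel time, as guaranteed by Lemma~\ref{l:clock}. Hence the $r$-th round, from entry to full synchronisation (or to an anomaly being declared during this round), has expected duration $O(\log n)$.

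Second, I would apply linearity of expectation across all $R$ rounds. Let $T_r$ denote the time spent in round $r$ before the clock either advances to round $r+1$ or is stopped by an anomaly. The analysis above gives $\mathbb{E}[T_r]=O(\log n)$ uniformly in $r$, so the total expected runtime is $\sum_{r=1}^{R}\mathbb{E}[T_r]=O(R\log n)$. Early termination by anomaly detection is absorbed into this bound, since it simply truncates the sum.

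The main obstacle will be handling the case of multiple concurrent leaders, each maintaining its own counter $s$, which could in principle drift apart and skew the per-round time. However, the clock protocol triggers a clock anomaly as soon as any two round numbers differ by more than $1$, and by Lemma~\ref{l:clock} any lagging agent either catches up via epidemic propagation within $O(\log n)$ expected time or provokes an anomaly that halts the clock immediately. Thus the per-round bound of $O(\log n)$ is preserved irrespective of how many leaders remain active, and summing over $R$ rounds yields the claimed $O(R \log n)$ expected termination time.
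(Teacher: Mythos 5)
Your proposal is correct and follows essentially the same route as the paper: both arguments reduce the bound to the fact that a surviving leader needs only $c\log n$ of its own interactions per round (each occurring at rate $2/n$, hence $O(\log n)$ expected parallel time), plus an $O(\log n)$ epidemic to synchronise the rest, summed over $R$ rounds. The only cosmetic difference is that the paper formalises the anomaly case by coupling with a modified clock in which lagging agents always catch up (so the anomaly-detecting clock terminates no later), whereas you handle it directly by observing that anomaly detection merely truncates the per-round sum; both resolutions are sound.
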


\begin{proof}
Let us consider a modified clock protocol in which anomaly detection is replaced by a new rule.
This rule specifies that when two agents interact in rounds $r$ and $r'$, with $r' < r$, the agent in round $r'$ adopts round $r$. Observe that the anomaly-detection clock protocol terminates in time no worse than the modified one. The two protocols either complete their counting during the same time, or, if an anomaly is detected, the original clock finishes earlier than the modified one.  To bound the time for the modified clock protocol to terminate after $R$ rounds, consider the computation of an agent that remains a leader throughout.
For the modified clock to finish, firstly, this agent must count $R$ rounds. 
Counting each of them takes no more than $c\log n$ interactions of the leader.
Hence, counting $R$ rounds takes the leader expected time $O(R\log n)$. 
Secondly, all the remaining agents must also advance to round $R$, which occurs in time no worse than the time of the epidemic process by which the leader, already in round $R$, spreads round $R$ to the population. 
This takes the expected time $O(\log n)$. 
Therefore, the modified clock protocol terminates in expected time $O(R\log n)$. As noted earlier, this also implies that the anomaly-detection clock protocol terminates in expected time $O(R\log n)$.
\end{proof}

\subsection{MC+ Protocols}\label{s:MC+}

In this subsection, we define MC+ protocols, which compute the correct result with high probability. However, unlike regular MC protocols, they provide guarantees of correct task execution. They complete the computation after polylogarithmic time. If, for some reason, the result is incorrect, they signal an anomaly. An anomaly can also be signaled with a correct result, but its overall probability is negligible. Therefore, the absence of an anomaly guarantees the correct execution of the computation. The MC+ protocol computations are limited to a certain number of rounds counted by the clock. After completing one MC+ protocol, it is possible to continue the computation using the next MC+ protocol, assuming the previous one has executed correctly. However, if one of these protocols performs the computation incorrectly, signaling an anomaly ensures a transition to a stable, but possibly slow backup. For example, the design of the leader election protocol prevents the disappearance of all leaders, which would cause the clock to stop, but it is possible that several leaders remain after its completion. The calculations of the subsequent protocols performed assuming the calculation of a single leader will not produce a sensible result, but the inevitable interaction of different leaders will lead to anomaly signaling switching the calculations to a backup.

\paragraph{State space}
This part demonstrates how to effectively integrate multiple population protocols, necessitating a more sophisticated state space structure.
In particular, agent states are recorded as tuples $(A_0)(A_1)(A_2)(A_3)$. 
Element $A_0$ contains the input of the protocol, i.e., \ set $X$.
Element $A_1$ contains global information computed by the protocol, i.e.,\ whether the agent is a leader or belongs to one of the weights and whether this weight is currently placed on the scale pan.
Element $A_1$ will also contain the output.
We call this element the {\em global state} of an agent.
Element $A_2$ specifies the round number $r$ and contains the leader's interaction counter in a given round. 
Element $A_3$ contains local information specific to the subprotocol executed in round $r$. We call this part the {\em local state} of an agent.
The transition $(A_0)(A_1)(A_2)(A_3)+(B_0)(B_1)(B_2)(B_3)\rightarrow (A_0')(A_1')(A_2')(A_3')+(B_0')(B_1')(B_2')(B_3')$ is implemented as follows. First, we use the clock transition rule to calculate the round number in states $A_2'$ and $B_2'$, optionally incrementing the interaction counter(s) in the round. Simultaneously, we calculate other states $A_2,B_2$ according to the transition rules for the subprotocol. If necessary, we also update the information whether the agent is still the leader or about the weights or output in $A_1',B_1'$. Elements $A_0,B_0$ remain unchanged.

In this work, we focus on fast population protocols, specifically Monte Carlo protocols with anomaly detection (MC+ protocols). These protocols either silently stabilise on the correct computational output whp or trigger an anomaly detection signal. To maintain order in protocol analysis, we distinguish two types of anomalies:
\begin{itemize}
    \item \emph{clock anomaly}, that arises when agents whose rounds differ by more than 1 interact,
    \item \emph{protocol anomaly}, which involves the detection of irregularities specific to the protocol itself.
\end{itemize}

\noindent To capture the desired properties, we introduce the definition of MC+ protocols.

\paragraph{MC+ protocol}
\begin{itemize}
    \item 
    Is active for a specified number of rounds, meaning its operations are limited to $R$ consecutive rounds.
    Note that if there are no anomalies, this means reaching a silent configuration.
    \item 
    There is a negligible probability that a protocol anomaly signal will be triggered during the protocol's execution or within an expected time of $O(n^a)$ following its completion. Such an anomaly signal would stem from a state inconsistency, potentially indicating erroneous behavior in the protocol.
    \item 
    The absence of such a signal, together with the lack of clock anomaly, indicates a successful execution of the task.
\end{itemize}

\begin{theorem}\label{th:stable+mc+}
If an efficient MC+ protocol and a polynomial-time stable protocol exist for the same problem, they can be combined to yield a protocol that is both efficient and stable.
Moreover, if the polynomial-time stable protocol is silent, the resulting combined protocol is also silent.

\end{theorem}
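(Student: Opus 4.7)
The plan is to compose the two protocols in a single compound state space and use the MC+ anomaly signal as a trigger to hand control to the slow backup. I would extend each agent's state to a tuple $(A_0, A_{\text{MC}}, A_{\text{sl}}, \texttt{mode})$, where $A_0$ stores the unchanged input, $A_{\text{MC}}$ runs the MC+ subprotocol together with its clock as described in Section~\ref{s:MC+}, $A_{\text{sl}}$ runs the slow stable subprotocol from the same input $A_0$, and $\texttt{mode}\in\{\texttt{fast},\texttt{slow}\}$ selects which subprotocol governs transitions and provides the reported output. The transition function is piecewise: in \texttt{fast} mode only $A_{\text{MC}}$ updates; in \texttt{slow} mode only $A_{\text{sl}}$ updates. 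The single cross-rule is that whenever any agent observes a clock or protocol anomaly (in itself or in its partner), it sets its own \texttt{mode} to \texttt{slow}; moreover the mode acts as a one-way epidemic $\texttt{fast}+\texttt{slow}\rightarrow\texttt{slow}+\texttt{slow}$, so a single anomaly spreads globally in $O(\log n)$ time.

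For correctness, the MC+ definition ensures that on the event $\mathcal{G}$ that no anomaly ever fires (neither during the $R$ rounds nor in the expected $O(n^a)$ verification window afterwards), the silent $A_{\text{MC}}$ configuration encodes the correct answer and no agent ever switches mode, so the composite reports that answer. On the complementary event $\overline{\mathcal{G}}$, the \texttt{slow} flag propagates to every agent and the backup subprotocol, which is stable by assumption, reports the correct answer with probability $1$. Hence the composite is stable. For silence, whichever branch survives ends in a silent configuration while the \texttt{mode} flag freezes the other branch, so no further transitions occur once the active branch stabilises; if the slow protocol is silent, so is the composite.

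For the expected running time, Lemma~\ref{l:clocktime} gives that the MC+ portion (whether it terminates by completing $R$ rounds or by triggering an anomaly) takes expected time $O(R\log n)=O(\text{polylog}\, n)$. Writing $p=\Pr[\overline{\mathcal{G}}]\le n^{-\eta}$ from the MC+ guarantee and $T_{\text{sl}}=O(n^b)$ for the expected time of the slow protocol, the expected total time of the composite is bounded by $O(\text{polylog}\, n) + p\cdot T_{\text{sl}} = O(\text{polylog}\, n) + n^{b-\eta}$, which is $O(\text{polylog}\, n)$ once $\eta>b$. This is legitimate since $\eta$ is a tunable constant of the MC+ definition.

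The main obstacle I anticipate is the asymmetry between the fast protocol's apparent completion and the late anomaly window: even after $A_{\text{MC}}$ has silently stabilised, an anomaly can still fire up to expected $O(n^a)$ time later. I would handle this by keeping only the anomaly-detection rules (not the computational transitions) alive after round $R$, so that a late anomaly can still flip every $\texttt{mode}$ flag to $\texttt{slow}$ and invoke the backup. On the high-probability event $\mathcal{G}$ no such transition ever fires, preserving silence; on $\overline{\mathcal{G}}$ the backup subsequently drives the population to its own silent fixed point. Verifying carefully that this late-trigger mechanism does not interact badly with the backup's convergence, and that its invocation probability is small enough to absorb the polynomial slow-time into a polylogarithmic expectation, is the delicate part of the argument.
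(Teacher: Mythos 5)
Your proposal is correct and follows essentially the same route as the paper: run the MC+ protocol under its clock, let any clock or protocol anomaly spread epidemically and hand control to the slow stable backup, and bound the expected time by the $O(R\log n)$ clock-completion time plus the (negligible) anomaly probability times the polynomial slow running time, with the exponent of the failure probability tuned to dominate that of the backup. Your explicit treatment of the compound state space, the one-way \texttt{mode} epidemic, and the late-anomaly window after round $R$ only spells out details the paper leaves implicit (its third cost term $(n^a+n^a)\cdot n^{-a}$ is exactly your late-trigger accounting), so there is no substantive difference.
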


\begin{proof}
Assume that the efficient protocol completes its computations in $R$ rounds.
If not successful, it reports a protocol anomaly in at most $n^a$
average time with negligible probability less than $n^{-a}$, for suitably chosen protocol parameters and $a>0$.
In this time, by Lemma~\ref{l:clock} the clock reports an anomaly with negligible probability less than $n^{-a}$.
Assume also that the slow algorithm runs in expected time less than $n^a$.
The expected execution time of the combined algorithms consists of the following components:
\begin{itemize}
\item the expected clock completion time which by Lemma \ref{l:clocktime} is $O(R\log n)$,
\item the execution time of the slow protocol times the probability of clock anomaly: $n^{a}\cdot n^{-a}=1$, 
\item the average time needed to handle protocol anomaly times its probability: $(n^a+n^a)\cdot n^{-a}=2$.
\end{itemize}

Therefore, the expected execution time of the combination of efficient and slow protocol is at most
$ O(R\log n)+1+2=O(R\log n).
$
Moreover, the combination of the protocols is stable and silent.

\end{proof}

\begin{theorem}\label{th:combine_mc+}
A sequence of a polylogarithmic number of MC+ protocols is an MC+ protocol.
\end{theorem}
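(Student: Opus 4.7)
The plan is to construct the combined protocol by sequential composition, using the clock's round counter to identify which subprotocol is currently active, and then verify each of the three MC+ requirements in turn. Let $P_1,\dots,P_k$ be MC+ protocols with $R_1,\dots,R_k$ rounds respectively, where $k=\mathrm{polylog}(n)$. The composition runs $P_j$ in rounds $\big(\sum_{i<j}R_i,\sum_{i\le j}R_i\big]$, so the total number of rounds is $R=\sum_{i=1}^{k}R_i$, which remains polylogarithmic since it is a sum of polylogarithmically many polylogarithmic terms. This immediately gives the first MC+ property: the composed protocol is active for $R$ rounds, and (absent anomalies) reaches a silent configuration after round $R$, because $P_k$ does.

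For the anomaly bound I would argue by union bound. Fix the constant $a>0$ appearing in the MC+ definition for the composed protocol; we may choose it strictly smaller than the exponents $a_1,\dots,a_k$ of the individual $P_i$. Each $P_i$ triggers a protocol anomaly during its execution or within expected time $O(n^{a_i})$ after its completion with probability at most $n^{-a_i}$. Taking the union over the $k=\mathrm{polylog}(n)$ subprotocols, the total probability that any protocol anomaly is raised during the composed execution or within $O(n^a)$ after it is at most $k\cdot n^{-\min_i a_i}$, which remains negligible (i.e., at most $n^{-a}$) by adjusting constants. The third MC+ property, that the absence of any signal certifies success, then follows by a straightforward induction on $j$: if neither a clock anomaly nor a protocol anomaly has been signalled up to round $\sum_{i\le j}R_i$, then the first $j$ subprotocols all completed successfully, so $P_{j+1}$ starts from a correct input configuration and its own MC+ guarantee applies.

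The one subtle point, which I expect to be the main obstacle, is justifying that the input preconditions of $P_{j+1}$ are genuinely satisfied at the moment its first round begins. This requires two things: that $P_j$ has silently stabilised before round $\sum_{i\le j}R_i+1$ starts at any agent, and that agents currently ``behind'' in the clock do not corrupt the global state produced by $P_j$ before adopting the new round. The first point follows from the silence clause of property 1 for each $P_i$ together with Lemma~\ref{l:clock}, which ensures that within $O(\log n)$ time after the first agent advances, all agents have advanced; during this interval only rules of $P_j$ and $P_{j+1}$ on adjacent rounds are executed, and since $P_j$ has already silently stabilised its rules are inert. The second point is handled by the clock anomaly mechanism: interactions across non-adjacent rounds would raise a clock anomaly (negligible probability by Lemma~\ref{l:clock}), and interactions across adjacent rounds are restricted, by the compound-state convention of Section~\ref{s:MC+}, to updating only the local state $A_3$ of the lagging subprotocol, leaving the global state $A_1$ intact. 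Collecting these observations, the composed protocol satisfies all three MC+ properties, completing the proof.
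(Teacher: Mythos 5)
Your proposal is correct and follows essentially the same route as the paper's own proof: a union bound over the $k=\mathrm{polylog}(n)$ subprotocols showing the total anomaly probability is $k\cdot n^{-a}$ and hence negligible, together with the observation that $R=\sum_i R_i$ remains polylogarithmic and the post-completion detection time stays bounded by $n^a$. Your additional discussion of the hand-off between consecutive subprotocols (silent stabilisation of $P_j$ before $P_{j+1}$ begins, via Lemma~\ref{l:clock} and the compound-state convention) is a sound elaboration of a point the paper's proof leaves implicit, but it does not change the underlying argument.
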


\begin{proof}
Assume that there is a sequence of $k$ MC+ protocols running in $R_1,R_2,\ldots, R_k$ rounds.
There exists $a>0,$ s.t., the probability of detecting a protocol anomaly is at most $n^{-a}$ for each protocol, and the time needed to detect an anomaly is, on average, less than $n^a$.
So, the probability of detecting a protocol anomaly in a series of these protocols is $kn^{-a}$, which is negligible.
The total number of rounds is $R=\sum_i R_i$, which is polylogarithmic.
The average time needed to detect a protocol anomaly after completion of the protocol is at most~$n^a$.
\end{proof}

\subsection{Leader Election}\label{leaderelection}

The MC+ protocol for leader election proposed in this work builds upon a scheme that has been employed in several prior studies~\cite{AngluinAE08, DBLP:journals/jacm/GasieniecS21, DBLP:conf/stoc/BerenbrinkGK20}.
In this approach, the leaders perform coin flips in successive rounds.
Whether an agent draws heads or tails depends on the random scheduler’s designation of that agent as either the initiator or the responder in an interaction.
If, in a given round, one leader flips tails while another flips heads, the leader obtaining tails becomes a follower.
Unlike previous protocols, our method associates the coin-flip outcome (heads or tails) with the round number, thereby ensuring that at least one leader always persists.
Although the protocol is less efficient than several leader election protocols known from the literature, this construction reflects our emphasis on clarity and conceptual simplicity.
Adapting more sophisticated existing protocols to the MC+ framework would likely entail additional technical difficulties.
Furthemore, leader election does not constitute the bottleneck for the efficiency of protocols for parity or congruence.

Besides detectable clock anomalies, the only other potential failure is persistent multiple leaders, which is detected as a protocol anomaly upon interaction after leader election concludes. This detection occurs in expected time $O(n)$.


\begin{algorithm}[htb!]
\begin{algorithmic}[1]\label{a:2}
\Require{~$X$, a set of agents}
\Ensure{A set of agents with a single leader}
\Function{LeaderElection}{$X$}

\State{All agents adopt state $L$, indicating that each is a leader (candidate)}
\For{round $r\gets 0$ \textbf{to}  $d\cdot LOG$ }

\State{Each leader flips a symmetric coin obtaining \texttt{H} or \texttt{T}}\Comment{Using order in interaction}

\DLineIf{an agent flipped \texttt{H} as a leader or received information that \texttt{H} was flipped in $r$}{It spreads by epidemic information that \texttt{H} was flipped in $r$}

\DLineIf{a leader flipped \texttt{T} and received information that \texttt{H} was flipped in $r$}{This leader changes type to a follower}

\EndFor

\LineIf{after protocol end there is an interaction between two leaders}{\textbf{signal}\ \textsc{Anomaly}}

\EndFunction
\end{algorithmic}
\caption{$\Call{LeaderElection}{}$ Protocol}\label{alg:leader}
\end{algorithm}

\begin{theorem}
There exists a constant $d>0$ such that  $\Call{LeaderElection}{}$ (Algorithm~\ref{alg:leader}) is an MC+ protocol,
which either elects a single leader whp or reports an anomaly. 
Moreover, protocol anomaly is detected in the expected time $O(n)$ following the completion of the protocol.\label{th:leader}
\end{theorem}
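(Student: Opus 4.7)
The plan is to prove the theorem by analyzing the per-round reduction in the number of surviving leaders, then combining a concentration argument with the $O(n)$ hitting time for a post-protocol leader-leader interaction to match the MC+ requirements.

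First I would let $L_r$ denote the number of leaders present at the start of round $r$ (after the within-round epidemic propagation settles). Conditioning on $L_r=k$, each surviving leader independently flips H or T (via its role as initiator/responder) with probability $1/2$. Let $A$ be the number of H-flippers. By Lemma~\ref{l:clock}, within a single round of length $c\cdot LOG$ interactions, the epidemic spreading the message ``H was flipped in $r$'' reaches every leader whp, so $L_{r+1}=A$ when $A\geq 1$, and $L_{r+1}=k$ when $A=0$. This immediately yields the invariant that at least one leader always persists, since the only way to lose leader status is to hear that an H was flipped, which requires at least one H-flipper to survive. A direct calculation gives $\mathbb{E}[L_{r+1}\mid L_r=k]=k/2+k\cdot 2^{-k}\leq 3k/4$ for $k\geq 2$, so the leader count decays geometrically in expectation.

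Next, I would show that for a sufficiently large constant $d$, $L_{d\cdot LOG}=1$ whp. I would split the analysis into two phases. In the first phase, from $L_0=\Theta(n)$ down to $L_r=O(\log n)$, the geometric-decay bound together with a Markov/Azuma argument yields $O(\log n)$ rounds whp. In the second phase, while $L_r\geq 2$, the probability of not reducing in one round equals $2\cdot 2^{-L_r}\leq 1/2$; in particular once $L_r=2$, the probability of remaining at $2$ after one round is exactly $1/2$ (both H or both T), and reaching $1$ is automatic as soon as a mixed round occurs. Hence the ``stuck at $2$'' event contributes the dominant tail, bounded by $2^{-\Omega(d\log n)}\leq n^{-a}$ for $a$ as large as desired (by choosing $d$). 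A union bound over rounds confirms $\Pr[L_{d\cdot LOG}\geq 2]\leq n^{-a}$.

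Then I would handle anomaly detection. If two or more leaders remain after round $d\cdot LOG$, any specific pair of agents interacts in expected $O(n)$ parallel time, since the scheduler chooses each ordered pair with probability $2/(n(n-1))$ per interaction; hence any two surviving leaders meet within expected $O(n)$ time, triggering the protocol anomaly rule. Combined with the previous step, this gives failure probability $\leq n^{-a}$ and expected detection time $O(n)=O(n^a)$ for any $a\geq 1$ (after possibly enlarging $d$).

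Finally, I would verify the three conditions of an MC+ protocol. Activity is confined to $R=d\cdot LOG=O(\log n)$ rounds, which is polylogarithmic; the anomaly signal fires with negligible probability $n^{-a}$ and is detected in expected $O(n^a)$ time after protocol completion; and the absence of both a clock anomaly and a protocol anomaly forces exactly one surviving leader, because at least one always persists by design and no two-leader interaction occurred post-completion. Silence follows since, in the success branch, no transition rule of Algorithm~\ref{alg:leader} applies once a single leader remains. The main obstacle, in my view, is the concentration bound for the second phase: one must carefully argue the geometric tail on the ``all same flip'' events across many rounds at small $L_r$, and ensure the $c$ of Lemma~\ref{l:clock} is large enough that within-round epidemic propagation succeeds simultaneously in every one of the $d\cdot LOG$ rounds whp, so that the idealized per-round analysis above truly governs the process.
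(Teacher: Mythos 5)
Your proposal is correct in substance and shares the skeleton of the paper's proof: a per-round reduction of the leader count with constant probability, concentration over the $d\cdot LOG$ rounds, the observation that at least one leader always survives (losing leadership requires hearing of an \texttt{H}-flip, which requires a surviving \texttt{H}-flipper), and the expected $O(n)$ meeting time of two surviving leaders for anomaly detection. Where you genuinely differ is the concentration argument. The paper defines a single per-round event $A_r$ (``at least half, but not all, leaders flip tails''), shows $\Pr[A_r]\geq 3/8$ (minimised at three leaders), notes that each occurrence at least halves the leader count so $\log n$ occurrences suffice, and applies one Chernoff bound over the $d\cdot LOG$ rounds. You instead track $\mathbb{E}[L_{r+1}\mid L_r=k]\leq \frac{3}{4}k$ and run a two-phase analysis; the computation is right, but your phase 1 is the soft spot: Markov applied to $\mathbb{E}[L_r]$ does not give a whp bound on reaching $O(\log n)$ leaders (the expectation bottoms out near a constant, leaving only an $O(1/\log n)$ failure bound), and Azuma is awkward because the one-round decrements of $L_r$ are unbounded. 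The clean repair, which also makes both of your phases unnecessary, is to run your recursion on $M_r=L_r-1$: one checks $\mathbb{E}[M_{r+1}\mid L_r=k]\leq \frac{3}{4}M_r$ for every $k\geq 1$, hence $\Pr[L_{d\cdot LOG}\geq 2]=\Pr[M_{d\cdot LOG}\geq 1]\leq \left(\frac{3}{4}\right)^{d\cdot LOG}n\leq n^{-a}$ for $d$ large, with no Chernoff and no case split. Both your route and the paper's condition on all $d\cdot LOG$ within-round epidemics succeeding (your closing caveat about choosing $c$ in Lemma~\ref{l:clock}); the paper is equally terse on this point. The rest of your argument---the $O(n)$ detection time via the pairwise meeting probability, the verification of the MC+ conditions, and silence on the success branch---matches the paper's proof of Theorem~\ref{th:leader}.
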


\begin{proof}
Let $A_r$ denote an event in round $r$, defined as follows.
If round $r$ has at least two leaders, $A_r$ is the event that at least half, but not all, of the leaders flip tails.
In this case, the probability of $A_r$ is at least $3/8$ (with the minimum occurring when there are three leaders).
When event $A_r$ occurs, since $c$ is chosen to be sufficiently large,
the epidemic in round $r$ succeeds whp, ensuring that at least half of the leaders are eliminated.
If round $r$ has exactly one leader, $A_r$ is defined as the event that an independent draw with success probability $3/8$ succeeds.

A sufficient condition for electing a single leader is that event $A_r$ occurs in at least $\log n$ rounds.
Over $d\cdot LOG$ rounds, the expected number of occurrences of $A_r$ is at least $\frac{3}{8} d \log n$.
By the Chernoff bound, for sufficiently large $d$, event $A_r$ occurs at least $\log n$ times whp.
Hence, a single leader is elected whp.
If this is not the case, an anomaly is signaled in an average time $O(n)$ required for two remaining leaders to interact. For pseudocode of Algorithm see page~\pageref{a:2}.
\end{proof}

\subsection{Exact Majority}\label{majority}

In this section, we construct and analyse our majority computation protocol.  
We conjecture that the best known majority protocol, achieving time $O(\log n)$, can be adapted into the MC+ protocol operating within a single or fixed number of rounds. Incorporating such a protocol into parity computation would reduce its overall time to $O(\log^2 n)$. However, owing to the complexity of this task, we do not implement it here and leave it as an open problem.

\begin{algorithm}[H]
\begin{algorithmic}[1]\label{a:maj}
\Require{Subpopulations of agents $X$ and $Y$}
\Ensure{Determine whether $|X|>|Y|$, $|X|<|Y|$, or $|Y|=|X|$.}
\Function{Majority}{$X, Y$}

\State{Assign state $(1)$ to agents in $X \setminus Y$, $(-1)$ to those in $Y \setminus X$, and $(0)$ to the remaining agents.}

\For{round $r\gets 0$ \textbf{to}  $LOG+4$} 

  \LineIf{interacting agents are in the same round $r$}
 execute when applicable:
  \State{\hspace{\algorithmicindent} $(a) + (b) \rightarrow \left( \lfloor \frac{a+b}{2} \rfloor \right) + \left( \lceil \frac{a+b}{2} \rceil \right)$ or} 
  \State{\hspace{\algorithmicindent} $(x_{win})+(0|1|2)\rightarrow (x_{win})+(x_{win})$ or} \State{\hspace{\algorithmicindent} $(y_{win})+(-2|-1|0)\rightarrow (y_{win})+(y_{win})$.}

  \State{{\bf upon} leader entering round $r$ in state $(a)$~{\bf do}}
   \indLineIfEIEIEI{$r<LOG+4$ \textbf{and} $(a)=(-1|0|1)$}
   {$(a)\rightarrow(2a)$}
   {$r<LOG+4$ \textbf{and} $(a)=(-2|2)$}
   {$(-2)\rightarrow (y_{win})$ or $(2)\rightarrow (x_{win})$}
  {$(a)=(x_{win}|y_{win})$}{
  $(x_{win})\rightarrow (X_{win})$ or $(y_{win})\rightarrow (Y_{win})$}
  {$r=LOG+4$ \textbf{and} $(a)=(-2|-1|0|1|2)$}{$(a)\rightarrow(tie)$}

    \State{{\bf upon} follower entering round $r$ in state $(a),$ triggered by an agent in state $(b)$~{\bf do}}
    
    \indLineIf{$(b) \neq (X_{win}|Y_{win}|tie)$} 
 
   {\indindLineIfElseElse{$(a)=(-1|0|1)$}
   {$(a)\rightarrow(2a)$}
   {$(a)=(-2|2)$}
   {$(-2)\rightarrow (y_{win})$ or $(2)\rightarrow (x_{win})$}
   {\textbf{signal} \textsc{Anomaly}}}

   \indLineIf{$(b)=(X_{win})$}
   {\indindLineIfElse{$(a)=(x_{win}|X_{win})$}{$(x_{win})\rightarrow (X_{win})$}
   {\textbf{signal} \textsc{Anomaly}}}

   \indLineIf{$(b)=(Y_{win})$}
   {\indindLineIfElse{$(a)=(y_{win}|Y_{win})$}{$(y_{win})\rightarrow (Y_{win})$}
   {\textbf{signal} \textsc{Anomaly}}}

   \indLineIf{$(b)=(tie)$}
   {\indindLineIfElse{$(a)=(-2|-1|0|1|2)$}{$(a)\rightarrow (tie)$}
      {\textbf{signal} \textsc{Anomaly}}}

\EndFor

\EndFunction
\end{algorithmic}
\caption{$\Call{Majority}{}$ Protocol}\label{alg:majority}
\end{algorithm}

Our MC+ majority protocol (see Algorithm~\ref{alg:majority}) operates over $LOG + 4$ rounds and achieves a time complexity of $O(\log^2 n)$. 
Throughout this construction, we assume that a unique leader has already been selected.
The protocol compares two subpopulations, $X$ and $Y$, determining which is larger or whether they are of equal size.
Notably, $X$ and $Y$ need not be disjoint.
At the start of the first round, agents are assigned states as follows:
elements exclusive to $X$ are assigned state $(1)$;
elements exclusive to $Y$ state $(-1)$;
elements in both 
$X$ and $Y$ 
or in neither state $(0)$.
During each round $r$, an averaging operation is performed on pairs of agents in round $r$:
$(a) + (b) \rightarrow \left( \lfloor \frac{a+b}{2} \rfloor \right) + \left( \lceil \frac{a+b}{2} \rceil \right)$.
When transitioning to the next round, a doubling operation is applied: $(a) \rightarrow (2a)$.

If, when moving to round $r$, we attempt to perform a doubling operation on the state $(a)=(-2|2)$ (i.e., $(a)=(-2)$ or $(a)=(2)$), the following transition is performed: $(2)\rightarrow(x_{win})$ or $(-2)\rightarrow(y_{win})$.
Once agents in state $(x_{win})$ or $(y_{win})$ appear, the corresponding state is propagated by the epidemic process
on states not contradicting this outcome: $(x_{win})+(0|1|2)\rightarrow (x_{win})+(x_{win})$ or 
$(y_{win})+(-2|-1|0)\rightarrow (y_{win})+(y_{win})$.
Note that, e.g.,\ states $(-1|-2)$ are not susceptible to infection by state $(x_{win})$.
In this scenario, when moving on to round $r+1$, all agents are in state $(x_{win})$ or $(y_{win})$ whp.
The leader is the first to advance to round $r+1$ and makes whp the transition:
$(x_{win})\rightarrow (X_{win})$ or $(y_{win})\rightarrow (Y_{win})$.
Subsequently, all other agents with state $(x_{win})$ or $(y_{win})$ moving to round $r+1$ perform exactly the same transition as the leader whp.
The state $(X_{win})$ or $(Y_{win})$ remains unchanged until the final round of the majority protocol, unless an anomaly is signaled.
This state indicates that the inequality $|X|>|Y|$ or $|X|<|Y|$ holds, respectively.

There are two unlikely cases when an anomaly is detected.
The first such case is when one of the agents has a state different from the leader's state $(x_{win})$ or $(y_{win})$
when moving to round $r+1$.
The second case is that the leader had a state other than $(x_{win})$ or $(y_{win})$ when entering round $r+1$, while another agent entering that round has one of these states.

If $X$ and $Y$ have the same cardinality,
the protocol avoids states $(x_{win})$ and $(y_{win})$ whp.
If, at the end of round $LOG+3$, the leader is in state $(-2|-1|0|1|2)$, it adopts state $(tie)$ in the next round.
This state $(tie)$ is then propagated by the epidemic process to all agents entering round $LOG+4$ in states different to $(x_{win})$ and $(y_{win})$.
However, if an agent enters round $LOG + 4$ in state $(x_{win})$ or $(y_{win})$, a protocol anomaly is signaled (as already specified in the previous paragraph).

In the proofs concerning our majority protocol, we make use of a Theorem from~\cite{berenbrink2018simpleloadbalancing}.
Below, we provide an equivalent formulation of this result, adapted to the model employed in our work.

\begin{theorem}[Theorem 1 in \cite{berenbrink2018simpleloadbalancing}]\label{load_balancing}
Suppose we are given an initial configuration in which each agent has a state $(a)$, where $a$ is an integer and the discrepancy between states is $\delta$. A protocol using only averaging arrives at a configuration with a discrepancy between states of at most $2$ after time $O(\log \delta + \log n)$~whp.
\end{theorem}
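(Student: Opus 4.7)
The plan is to adapt the classical quadratic-potential argument for discrete averaging load balancing to the uniform random-scheduler model. Let $\bar a = n^{-1}\sum_i a_i$; the total sum, and hence the mean, is preserved by every averaging interaction. Define $\Phi_t = \sum_i (a_i^{(t)} - \bar a)^2$, noting that $\Phi_0 \le n\delta^2$. A direct case analysis shows that when two agents with values $a,b$ interact, $\Phi$ decreases by $(a-b)^2/2$ when $a+b$ is even, and by $(a-b)^2/2 - 1/2 \ge 0$ when $a+b$ is odd. Combining this with the elementary identity $\sum_{i\ne j}(a_i - a_j)^2 = 2n\Phi$, a uniformly random ordered interaction reduces $\Phi$ in expectation by at least $\Phi/(n-1) - 1/2$. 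Iterating and rescaling to parallel time $T$, this yields $\mathbb{E}[\Phi_T] \le \Phi_0 e^{-T} + O(n)$, so $\mathbb{E}[\Phi_T] = O(n)$ after $T = O(\log\delta)$ parallel time.

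The second step is to upgrade the expectation bound to a whp bound and then to convert a small potential into a small discrepancy. Since $\Phi_t$ has bounded per-interaction differences (on the scale of $\delta$ initially, and much smaller later), a supermartingale / Azuma-style concentration argument turns the exponential decay of $\mathbb{E}[\Phi_t]$ into $\Phi_t = O(n)$ with probability $1 - n^{-\eta}$ after $O(\log\delta)$ parallel time. A small potential forces most values to lie within a constant distance of $\bar a$, but a few outliers can persist. To eliminate them I would introduce a secondary one-sided potential $\Psi_t = \sum_i \bigl(\max(a_i - \lceil\bar a\rceil,0)^2 + \max(\lfloor\bar a\rfloor - a_i,0)^2\bigr)$: once $\Phi = O(n)$, a random partner of any outlier lies in the bulk with constant probability, and every such interaction strictly decreases $\Psi$ after accounting for the rounding. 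A coupon-collector-style argument then drives $\Psi$ to $0$ --- equivalently, the discrepancy to at most $2$ --- within $O(\log n)$ further parallel time whp, yielding the claimed $O(\log\delta + \log n)$ bound.

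The main obstacle will be handling the $\pm 1/2$ slack injected by the floor/ceiling rounding. This slack prevents $\Phi_t$ from being a strict supermartingale and produces an $O(1)$ additive noise term at every interaction. The noise is negligible in the first phase, when the drift $\Phi/(n-1)$ dominates, but becomes comparable to the drift once $\Phi$ has shrunk to $O(n)$. The analysis therefore has to switch potentials at the right moment and explicitly rule out the possibility that an unlucky sequence of roundings repeatedly re-creates outliers; quantifying this --- most naturally via a Doob decomposition of $\Psi_t$ together with a union bound over the polylogarithmic number of interactions in which an outlier can plausibly survive --- is the central technical difficulty.
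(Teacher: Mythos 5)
First, note that the paper does not prove this statement at all: Theorem~\ref{load_balancing} is imported (in adapted form) from Berenbrink et al.~\cite{berenbrink2018simpleloadbalancing}, so there is no in-paper proof to compare yours against, and I am judging your sketch on its own. Your first phase is sound: the per-interaction drop of $\Phi$ (namely $(a-b)^2/2$, minus $1/2$ in the odd case), the identity $\sum_{i\neq j}(a_i-a_j)^2=2n\Phi$, and the resulting drift bound $\mathbb{E}[\Delta\Phi]\ge \Phi/(n-1)-1/2$ are all correct and do give $\mathbb{E}[\Phi]=O(n)$ after $O(\log\delta)$ parallel time. Even here, though, be careful with the concentration step: a single interaction can change $\Phi$ by up to $\delta^2/2$ at the start, so a naive Azuma bound over $\Theta(n\log\delta)$ interactions yields slack of order $\delta^2\sqrt{n\log\delta}$, far above the $O(n)$ target; you would need to exploit the monotonicity of $\Phi$ and restart the argument phase by phase rather than apply a single martingale inequality.

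The genuine gap is the endgame. Your potential $\Psi$ vanishes exactly when every value lies in $[\lfloor\bar a\rfloor,\lceil\bar a\rceil]$, i.e., when the discrepancy is at most $1$ --- and that is not reachable in $O(\log n)$ time. Take the configuration $0,1,1,\dots,1,2$ with mean $1$: the interactions between $0$ and $1$ and between $2$ and $1$ leave the multiset of values unchanged under floor/ceiling averaging, so the only productive event is the meeting of the unique $0$ with the unique $2$, which takes $\Theta(n)$ parallel time in expectation. This is precisely why the theorem promises discrepancy $2$ rather than $1$. More importantly, ``coupon collector'' is the wrong mechanism even for the correct target: when an outlier at distance $d$ from the mean averages with a bulk agent, it is not removed but replaced by two outliers at distance roughly $d/2$, so the number of above-threshold agents can grow before it shrinks, and one must control this entire cascade simultaneously. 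That is the actual technical content of the cited theorem, and it is usually handled with level-set counting or exponential potentials of the form $\sum_i e^{\alpha(a_i-\bar a)}$ rather than the quadratic $\Phi$ and $\Psi$; your plan of a Doob decomposition plus a union bound over surviving outliers does not engage with this multiplication of outliers. I would either rebuild the second phase along those lines, with the target window widened so that the conclusion is discrepancy $2$, or simply cite the result as the paper does.
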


\begin{theorem}\label{majority_mc+}
Our majority protocol detects an anomaly with negligible probability.
In the absence of an anomaly, the protocol's output is always correct.    
\end{theorem}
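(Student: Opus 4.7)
The plan is to argue in parallel that (i) the averaging-doubling dynamics force the leader's final state to agree with the true comparison of $|X|$ and $|Y|$ whp, and (ii) each anomaly trigger is activated only with probability $n^{-\Omega(1)}$. The central invariant is that pairwise averaging exactly preserves the total of the numerical states, while the between-rounds doubling multiplies that total by $2$. Thus, as long as no win state has been created, the sum at the start of round $r$ equals $2^r(|X|-|Y|)$, and the average state is $2^r(|X|-|Y|)/n$. With the clock's round length chosen so that each round lasts $\Omega(\log n)$ time, Theorem~\ref{load_balancing} guarantees that after each round the discrepancy between numerical states is at most $2$ whp.

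For the strict majority case, suppose without loss of generality $\Delta := |X|-|Y|\geq 1$. After $r\geq \log(3n/\Delta)\leq LOG+2$ rounds, the average exceeds $2$, and combined with discrepancy at most $2$ this forces every non-terminal state to be nonnegative whp. The next doubling step then produces only states in $\{0,2\}$, which yield $(x_{win})$ but never $(y_{win})$. The one-way epidemic rule spreads $(x_{win})$ across the entire population in $\Theta(\log n)$ time whp, so the leader adopts $(X_{win})$ at the following round transition and every other agent consistently follows. The symmetric case $|X|<|Y|$ is identical with the roles of the two win states swapped.

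The main obstacle lies in the tie case $|X|=|Y|$, where the sum is $0$ throughout. A naive use of Theorem~\ref{load_balancing} only gives discrepancy $2$, admitting a configuration in $\{-1,0,1\}$ that would double to $\{-2,0,2\}$ and spuriously create both win states. To rule this out one requires a finer convergence argument: with integer states summing to $0$ and discrepancy at most $2$, every residual $(-1)$ must eventually interact with a $(1)$ and average to $(0,0)$, and continued averaging within the $\Theta(\log n)$-long round almost surely eliminates all such pairs. Formally, one shows via Chernoff bounds applied round-by-round, or via a coupon-collector-style argument over the $LOG+3$ rounds, that the probability of any $(-1)$ or $(1)$ surviving a round transition is $n^{-\Omega(1)}$. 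Consequently no state is ever doubled to $(\pm 2)$, no win states arise, the leader enters round $LOG+4$ in a non-win state and transitions to $(tie)$, which is then propagated epidemically to all agents whp.

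Finally, each protocol-anomaly trigger described in the construction—coexistence of $(X_{win})$ and $(Y_{win})$ or of their lower-case forms, the leader and a co-resident agent disagreeing on win/non-win at a round transition, or an agent reaching round $LOG+4$ in a win state after the leader declared a tie—corresponds precisely to one of the bad events already bounded by $n^{-\Omega(1)}$ in the case analysis above; clock anomalies are separately bounded by Lemma~\ref{l:clock}. A union bound over the $O(\log n)$ rounds preserves the negligible total probability, establishing the first claim. Conditional on no anomaly, the case analysis shows the leader's terminal state $(X_{win})$, $(Y_{win})$, or $(tie)$ matches the true comparison, establishing correctness.
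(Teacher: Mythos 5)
Your proof follows the paper's skeleton closely — the sum‑preservation/doubling invariant, Theorem~\ref{load_balancing} for discrepancy at most $2$, the case split on $\Delta\neq 0$ versus $\Delta=0$, and a union bound over the $O(\log n)$ rounds — but the step you yourself flag as "the main obstacle" is resolved with a claim that is false. You assert that continued averaging within a $\Theta(\log n)$‑long round eliminates all residual $(+1)/(-1)$ pairs except with probability $n^{-\Omega(1)}$. Under the rule $(a)+(b)\rightarrow\left(\lfloor\frac{a+b}{2}\rfloor\right)+\left(\lceil\frac{a+b}{2}\rceil\right)$, an interaction between $(1)$ and $(0)$ returns $(0)+(1)$ and makes no progress; a residual $(+1)$ disappears only by meeting a $(-1)$ directly. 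Once few such agents remain, the last pair needs expected $\Theta(n)$ parallel time to meet, which vastly exceeds the round length, so surviving $\pm 1$'s at a round boundary are typical rather than negligible. No Chernoff or coupon‑collector argument over $LOG+3$ rounds can deliver the $n^{-\Omega(1)}$ bound you need, so your derivation that "no state is ever doubled to $(\pm 2)$" in the tie case collapses.

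The paper does not attempt full cancellation. Instead, the two claims of the theorem are argued separately, and the second one — correctness in the absence of an anomaly — is a \emph{deterministic} argument that leans on the anomaly triggers themselves: in the biased case, the counting argument ${\cal S}_r=2^{r-1}\Delta\le 2n$ forces a win state to appear by round $LOG+3$ in every execution, absence of anomaly forces unanimity on a single win state, and the sign of the invariant sum (positive states are not susceptible to $(y_{win})$) rules out the wrong winner; in the tie case, the appearance of $(x_{win})$ with $\Delta=0$ forces a coexisting $(y_{win})$ or $(-2|-1)$ state, which necessarily triggers an anomaly at the next round transition, so no anomaly implies no win state and the $(tie)$ output is correct. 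Your correctness argument, by contrast, lives entirely inside the whp "dominant scenario", which can only yield "correct whp", not the "always correct absent an anomaly" that the theorem asserts. To repair the proposal you should (i) drop the cancellation claim and (ii) add the contradiction argument showing that every execution that would produce a wrong output necessarily raises an anomaly.
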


\begin{proof}
Let bias $\Delta=|X|-|Y|$ and ${\mathcal S}_r$ be the sum of states $(a)$ in which each agent starts round $r$.
Note that ${\mathcal S}_1=\Delta$.
The averaging operation does not change the sum of agent states $(a)$ in round $r$.
Thus as long as there are no $(x_{win})$ or $(y_{win})$ states,
${\mathcal S}_r$ is also the sum of states $(a)$ in which each agent ends round $r$.
Due to doubling operation ${\mathcal S}_{r+1}=2{\mathcal S}_r$, as long as states $(x_{win})$ and $(y_{win})$ are not present.    

We consider two cases. 
In the first case, one of the sets $X$ or $Y$ is larger than the other. 
Without loss of generality, assume this set is $X$. 
If there are no $(x_{win})$ or $(y_{win})$ states in round $r$, then ${\cal S}_r=2^{r-1}\Delta\ge 2^{r-1}$ and 
also ${\cal S}_r\le 2n$. 
Note that if no $(x_{win})$ or $(y_{win})$ states had appeared until round $LOG+3$, then the sum ${\cal S}_{LOG+3}$ would be at least $2^{LOG+2}$ which is greater than $2n$.
This proves that $(x_{win})$ or $(y_{win})$ states appear at the latest in round $LOG+3$.

We will now describe the dominant scenario that occurs whp.
At the end of each round $r$, before the state $(x_{win})$ appears,
the discrepancy between the values $(a)$ is at most $2$ whp due to Theorem \ref{load_balancing}.
Furthermore, since ${\cal S}_r>0,$ some values $(a)$ must be positive.
If eventually all values are $(a)=(-1|0|1)$, then they will double during transfer to round $r+1$.
If eventually all values are $(a)=(0|1|2)$ and some value $(a)=(2)$, then the state $(x_{win})$ is created in round $r+1$.
As we have already mentioned, the state $(x_{win})$ must appear in round $r+1\le LOG+3$.
In round $r+1$, the state $x_{win}$ is propagated to all agents in the population through the epidemic.
This causes all agents in round $LOG+4$ to adopt state $(X_{win})$.

Now, let's assume that no clock or protocol anomaly was detected.
As mentioned before, in some round $r\le LOG+3$, state $(x_{win})$ or $(y_{win})$ must have occurred.
So the conclusion $|X|=|Y|$ cannot be reached.
Since no anomaly was detected, at the end of round $r$, the leader must have also adopted the same state $(x_{win})$ or $(y_{win})$, as well as every other agent.
Therefore, at the beginning of round $r$, all agents must have had state $(x_{win})$ or state $(0|1|2)$. 
Since $\Delta>0$, all states in any round cannot be equal to $(y_{win})$ or $(-2|-1|0)$.
Thus, the outcome $|X|<|Y|$ cannot be reached.
This~guarantees that all agents adopt the correct final state $(X_{win})$ meaning $|X|>|Y|$.

In the second case, bias $\Delta=0$.
In the dominant scenario, which holds whp, at the end of each round $r<LOG+4$, there are three possible states $(a)=(-1|0|1)$ due to Theorem \ref{load_balancing}.
Upon entering round $LOG+4$, the leader adopts $(tie)$, followed by all other agents.

Now, let's assume that no anomaly was detected.
Assume also by contradiction that in some round $r<LOG+4$, state $(x_{win})$ or $(y_{win})$ appeared -- without loss of generality, let this be $(x_{win})$.
Then, since $\Delta=0$, state $(y_{win})$ or $(-2|-1)$ must also have appeared in round $r$.
However, this would result in the detection of an anomaly in round $r+1$.
This proves that state $(x_{win})$ or $(y_{win})$ did not appear before round $LOG+4$.
This means that when moving to round $LOG+4$, both the leader and all other agents adopt the state $(tie)$, which reflects the correct outcome $|X|=|Y|$.
\end{proof}

\subsection{System of Weights}\label{weights}

In this section, we introduce the MC+ protocol, which creates \emph{weights} that are disjoint sets of agents.
Weight $W_i$ is a set consisting of $2^i$ agents. 
The weight creation rounds are indexed by integers $r=0,1,2,\ldots,LOG + 2$. 
In round $r$, the protocol creates $W_r$, consisting of $2^r$ agents.
If each of these agents is assigned unit mass, then  $W_r$, composed of $2^r$ agents, has a total mass of $2^r$.
At the start of the protocol, all followers are in state $\phi$, which marks free agents.
At the beginning of each subsequent round $r$ the leader posessing state $\lambda$
adopts state $\lambda_r$, which serves as the precursor of weight $W_r$ carrying its mass $2^r$.
The leader then transfers this mass to the first free agent it encounters via transition: $\lambda_r+\phi\rightarrow \lambda+w_r(r)$,
where $\lambda$ denotes a non-final leader state that carries no mass.
Agents participating in the creation of $W_r$ have states $w_r(i)$, indicating that the agent possessing it belongs to $W_r$ and holds mass $2^i$ from that weight.
Creating the weight is accomplished by a transition that divides the mass greater than $2^0=1$ between two agents: $w_r(i)+\phi\rightarrow w_r(i-1)+w_r(i-1)$ when $i>0$.
This process completes when all agents in $W_r$ have state $w_r(0)$ carrying mass 1.

In some round $r$, the weight $W_r$ will inevitably fail to unfold.
This will lead to a transition to round $r+1$ with state $\lambda_r$ or $w_r(i)$ for $i>0$ present in the population.
In the protocol, we mark such an event with a transition to state $m(1)$. 
We will propagate state $m(1)$ to the entire population via an epidemic within round $r+1$ whp. The leader moving to a new round $r+2$ in state $m(1)$ changes state to $m(2)$. Then, all followers moving to round $r+2$ in state $m(1)$ will receive state $m(2)$. We signal an anomaly if the leader does not receive state $m(1)$ during the epidemic in round $r+1$. This causes an anomaly to be signaled when a follower in state $m(1)$ attempts to transition to round $r+2$. Note that if state $m(1)$ does not appear when transitioning to round $r+1$, the weight $W_r$ can be stored in the global states. This is done when transitioning to round $r+2$.
With negligible probability, too few weights may be created to perform the parity protocol. To detect such an event, after the weights are created, their total mass is compared to the number of elements in the set $X$ for which the parity protocol is being performed. If an insufficient number of weights are created, an anomaly is signaled.

The absence of weights with masses between the largest created weight and $LOG$, i.e., these sets are empty, does not hinder the parity protocol.

\begin{lemma}\label{weight_logn}
Assume that at the beginning of round $r$ all 
$W_i,$ for $i\le r-1,$ have been created and the population has at least $2^{r+1}$ agents in state $\phi$. Then $W_r$ will be created in time $O(\log n)$ whp.    
\end{lemma}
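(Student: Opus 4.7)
The plan is to split the creation of $W_r$ into two stages: a \emph{seeding} stage in which the leader executes $\lambda_r+\phi\to\lambda+w_r(r)$ to produce one active agent carrying the full mass $2^r$, and a \emph{distribution} stage in which the rule $w_r(i)+\phi\to w_r(i-1)+w_r(i-1)$ fires $2^r-1$ times so that $2^r$ agents ultimately end up in state $w_r(0)$.

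For the seeding stage I would argue by a Chernoff-type bound: each scheduled interaction pairs the leader with a $\phi$-agent with probability $\Theta(\phi_0/n^2)$, and in the context where this lemma is invoked (with $|X|\le n/6$ and the previously built weights consuming fewer than $n/3$ agents) $\phi_0$ is a constant fraction of $n$. Hence within $O(\log n)$ parallel time such a pairing occurs whp, and after it the first occupant of $w_r(r)$ exists.

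For the distribution stage I would run an epidemic-style analysis on the count $K(t)$ of agents in some state $w_r(\cdot)$, noting that $K(0)=1$, $K$ increases by $1$ per halving, and the target is $K=2^r$. As long as the number of \emph{splittable} agents (those in $w_r(i)$ with $i>0$) is $\Omega(K)$ and at least $\phi_0-2^r\ge 2^r$ free agents remain, the rate of halvings per parallel time unit is at least $\Omega(K\cdot\phi_0/n)=\Omega(K)$, giving $K$ an exponential growth that doubles in expected $O(1)$ parallel time; summing over the $r=O(\log n)$ doublings and taking a union bound over per-doubling Chernoff tails promotes this to $O(\log n)$ whp, analogously to the standard epidemic argument of \cite{AngluinAE08}.

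The main obstacle is maintaining the splittable-fraction invariant: if the halvings occurred in an extremely unbalanced order, many active agents could accumulate at level $w_r(0)$ while only a few splittable agents remain to process high-mass splits, slowing the apparent growth rate of $K$. I would close this gap by a balanced-tree argument: the halving tree is grown symmetrically by the uniform scheduler, every splittable agent is equally likely to be the next to react with a free agent, and the tree depth is bounded by $r=O(\log n)$, so the subtree of still-splittable agents remains a constant fraction of $K$ throughout the cascade whp, preserving the $\Omega(K)$ rate needed for the final Chernoff/union-bound step.
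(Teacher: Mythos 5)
Your decomposition into a seeding step (the leader hands mass $2^r$ to one $\phi$-agent) and a splitting cascade matches the paper's setup, and the seeding analysis is fine. The gap is in the distribution stage. The invariant you propose in order to rescue the epidemic-style growth of $K(t)$ --- that the splittable agents remain a constant fraction of the active agents throughout the cascade --- is not merely unproved; it is false in the final stages, for reasons independent of how balanced the scheduler keeps the tree. The total mass carried by active agents is conserved at $2^r$, every splittable agent carries mass at least $2$, and every finished agent carries mass exactly $1$; so if $K=2^r-j$ agents are active and $s$ of them are splittable, their masses sum to $j+s$, whence $2s\le j+s$, i.e. $s\le j$. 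Thus once $K$ exceeds $2^r-o(2^r)$ the splittable fraction is $o(1)$, the rate bound $\Omega(K)$ fails, and no balanced-tree argument can restore it. Your sketch therefore shows at best that $K$ reaches a constant fraction of $2^r$ in $O(\log n)$ time, not that all $2^r$ agents of $W_r$ are produced in that time.

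The paper avoids this entirely with a per-path argument. Fix one root-to-leaf path in the splitting tree: it consists of $r\le\log n$ transitions $w_r(i)+\phi\rightarrow w_r(i-1)+w_r(i-1)$. Since the hypothesis guarantees at least $2^{r+1}$ agents in state $\phi$ at the start of the round and at most $2^r$ of them are ever consumed, the $\phi$-agents always form at least a $1/3$ fraction of the population, so the pending transition on the path fires with probability at least $\frac{2}{3n}$ in each interaction. A Chernoff bound then gives that the whole path completes within $O(n\log n)$ interactions whp, and a union bound over the $2^r<n$ root-to-leaf paths yields the lemma. This localises the analysis to the slowest lineage and makes the tree's balance irrelevant; if you want to keep your growth-rate framing, you would still need an argument of this per-path type to handle the tail of the process, so you may as well adopt it outright.
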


\begin{proof}
There are exactly $2^r$ agents that do not have state $\phi$ at the start of round $r$. These agents include the leader and those involved in all weights created up to round $r-1$. The probability that an agent in state $w_r( i)$ encounters an agent in state $\phi$ during an interaction is at least $\frac{2}{n} \cdot \frac{2^{r+1} - 2^r}{2^{r+1} + 2^r} \ge \frac{2}{3n}$.

The process of creating $W_r$ can be modeled as a binary tree, with the root representing state $w_r( r)$. This state has two children, each of mass $2^{r-1}$, corresponding to state $w_r( r-1)$. At the next level, there are four states $w_r( r-2)$, each with mass $2^{r-2}$, and so on. At the final level, there are $2^r < n$ states $w_r( 0)$, each of unit mass, represented as leaves.

Consider a path from the root to a leaf in this tree. This path consists of $r$ edges, each representing a transition $w_r( i) + \phi \rightarrow w_r( i-1) + w_r( i-1)$. We show that all transitions along any fixed path are completed in time $O(\log n)$ whp. The probability of executing the next transition on the path in a given interaction is at least $\frac{2}{3n}$. Over $cn \log n$ interactions, the expected number of interactions between the current agent on the path and an agent in state $\phi$ is at least $\frac{2}{3} c \log n$. By the Chernoff bound, this number exceeds $\log n>2^r$ whp for sufficiently large $c$. Thus, each transition along any fixed path completes in time $O(\log n)$ whp. By applying the union bound over all paths, we conclude that all transitions across all paths complete in time $O(\log n)$ whp. Consequently, the construction of $W_r$ also completes in time $O(\log n)$ whp.
\end{proof}

\begin{algorithm}[H]
\begin{algorithmic}[1]\label{a:3}
\Require{A set of agents with a single leader}
\Ensure{Weights created among non-leaders}
\Function{WeightCreation}{$X$}

\For{round $r\gets 0$ \textbf{to}  $LOG+2$} 

  \State{\textbf{upon} entering round $r$ triggered by an agent in state different than $m(2)$:}
\indLineIfElse{$r=0$}{\State{\hspace{\algorithmicindent}\hspace{\algorithmicindent} the leader adopts state $\lambda_0$ and non-leaders adopt state $\phi$.}}
  {\State{\hspace{\algorithmicindent}\hspace{\algorithmicindent} the leader changes state $\lambda$ to state $\lambda_r$ (with mass $2^r$),}
  \State{\hspace{\algorithmicindent}\hspace{\algorithmicindent} a follower changes local state $w_{r-2}(0)$ to global state $w_{r-2}$ and local $w$,}
  \State{\hspace{\algorithmicindent}\hspace{\algorithmicindent} the leader changes state $\lambda_{r-1}$ to state $m(1)$,}
  \State{\hspace{\algorithmicindent}\hspace{\algorithmicindent} a follower changes state $w_{r-1}(i)$ to state $m(1)$, when $i>0$,}  
  \State{\hspace{\algorithmicindent}\hspace{\algorithmicindent} the leader changes state $m(1)$ to state $m(2)$,}
  \State{\hspace{\algorithmicindent}\hspace{\algorithmicindent} a follower in state $m(1)$ \textbf{signals}\ \textsc{Anomaly}.}}  
  
  \State{\textbf{upon} entering round $r$ triggered by agent in state $m(2)$:}
  \State{\hspace{\algorithmicindent} a follower adopts state $m(2)$.}

  \DLineIf{interacting agents are in the same round}{
   \hspace{\widthof{ }}the leader transfers mass $2^r$ to the first free agent it encounters:
  $\lambda_r + \phi \rightarrow \lambda + w_r( r)$,\\
  \hspace{\algorithmicindent}\hspace{\algorithmicindent}\hspace{\algorithmicindent}\hspace{\widthof{\textbf{if}}}distribute the mass by repeating: $w_r( i) + \phi \rightarrow w_r( i-1) + w_r( i-1)$, for $i>0$,\\
  \hspace{\algorithmicindent}\hspace{\algorithmicindent}\hspace{\algorithmicindent}\hspace{\widthof{\textbf{if}}}epidemic of state $m(1)$: $m(1) + \text{\emph{any state}} \rightarrow m(1)+m(1)$.}

\EndFor
\State{Compare $\bigcup W_k$ and $X$ using the $\Call{Majority}{}$ protocol}
\State{(elements of $\bigcup W_k$ have global states $w_k$)}
\LineIf{$|\bigcup W_k|<|X|$}{\textbf{signal}\ \textsc{Anomaly}}

\EndFunction
\end{algorithmic}
\caption{$\Call{WeightCreation}{}$ Protocol}\label{alg:weights}
\end{algorithm}

\begin{theorem}\label{weights_mc+}
 $\Call{WeightCreation}{}$ protocol (Algorithm~\ref{alg:weights}) creates $W_0,W_1,W_2,\ldots, W_{Max(r)}$ correctly with high probability, for some value $Max(r)>\log \frac{n}{6}$. 
With negligible probability, it correctly creates $W_0,W_1,W_2,\ldots, W_{Max(r)}$, for some $Max(r)\le\log \frac{n}{6}$.

\end{theorem}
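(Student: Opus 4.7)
The plan is to analyse the protocol round by round, combining Lemma~\ref{weight_logn} with the $\Omega(\log n)$ round length guaranteed by the clock (Lemma~\ref{l:clock}) and one-way epidemic propagation to show that, whp, each round either successfully builds the next weight or correctly triggers the $max(\cdot)\to Max(\cdot)$ cascade that records the failure index. Outcomes that do not match the two stated success modes will then be caught by the algorithm's explicit anomaly checks, including the final majority comparison with $X$.

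First I would establish a high-probability invariant by induction on $r$: at the start of round $r$, exactly $2^r$ agents are non-$\phi$ (the leader plus the members of $W_0,\ldots,W_{r-1}$). As long as $3\cdot 2^r\le n$, equivalently $r\le\log(n/3)$, the hypothesis of Lemma~\ref{weight_logn} holds, so $W_r$ is completed in time $O(\log n)$ whp. The clock constant $c$ can be chosen so that each round lasts at least $C\log n$ time whp (Lemma~\ref{l:clock}), comfortably accommodating both this construction and the downstream epidemics of $max$, $Max$, and $end$. A union bound over the $LOG+3=O(\log n)$ rounds keeps the cumulative failure probability of order $n^{-\eta'}$, which establishes the first assertion: the weights $W_0,\ldots,W_{Max(r)-1}$ are created correctly and $Max(r)>\log(n/3)$ whp.

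For the other two cases, the key point is that the $max/Max/end$ layer either fires consistently at a single round $r^*$ or is detected as inconsistent. When round $r^*$ fails to finish weight creation, at the transition to round $r^*+1$ at least one agent remains in state $\lambda_{r^*}$ or $w_{r^*}(i)$ with $i>0$; that agent adopts $max(r^*)$, which spreads by one-way epidemic throughout round $r^*+1$ and saturates the population whp. The leader then enters round $r^*+2$ in state $max(r^*)$ and converts it to $Max(r^*)$, which in turn epidemic-spreads, marks members of $W_{r^*-1}$ as top, and decays to $end$. The explicit anomaly ``a non-leader in $max(r^*)$ meets a round-$(r^*+2)$ agent not in $Max(r^*)$'' triggers precisely when either epidemic fails or when inconsistent $max$ values coexist, both of which are negligible-probability events. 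Finally, the majority comparison of the marked top weight against $X$ catches the pathological residual scenario in which $Max(r)\le\log(n/3)$ and $2^{Max(r)-1}<|X|$, flagging an anomaly via Theorem~\ref{majority_mc+}; otherwise one of the two success modes listed in the statement holds.

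The main obstacle will be aligning the ``failure at round $r^*$'' timing with the two-round $max\to Max\to end$ promotion schedule. In particular, one must verify that every successful round leaves no residual $\lambda_r$ or $w_r(i)$ with $i>0$ that could generate a spurious $max(r')$ in a later round, and invoke the clock's anomaly-free operation whp to ensure that no two agents differ by more than one round, so that the failure index $r^*$ is unique whp. The standard Chernoff and union-bound arguments, together with the epidemic and clock facts already established, then close the routine remaining details.
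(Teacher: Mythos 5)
Your proposal is correct and follows essentially the same route as the paper's own proof: Lemma~\ref{weight_logn} plus the clock guarantees give whp creation of all weights up to index $\log\frac{n}{3}$, and the $max(r)\to Max(r)$ epidemic cascade at the first failed round either records the failure index consistently or triggers the explicit protocol anomaly. Your added details (the induction on the count of non-$\phi$ agents, the union bound over rounds, and the closing majority comparison against $X$) are consistent elaborations of the same argument rather than a different approach.
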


\begin{proof}
By Lemma \ref{weight_logn}, the duration of the round is sufficient to produce $W_r$ whp when $2^r\le n/3$.
This ensures that all weights $r \leq \log \frac{n}{3}$ are created correctly whp.
With negligible probability, the first incompletely formed $W_r$ satisfies $r \leq \log \frac{n}{3}$.

Let $W_r$ be the first weight that is not fully formed in its round.
This occurs when, upon transitioning to round $r+1$, the state $\lambda_r$ or $w_r(i),$ for some $i > 0,$ persists.
Consequently, the state $m(1)$ emerges upon entering round $r+1$.
This state is propagated whp to all agents in the population via an epidemic process during round $r+1$.
When the leader advances to round $r+2$, it updates the state $m(1)$ to $m(2)$. 
This causes all other agents to adopt state $m(2)$ at the start of the round $r+2$ whp.
With negligible probability, state $m(1)$ fails to reach the leader in round $r+1$.
Assume also, no clock anomaly occurs.
Then, the leader may advance to round $r+2$ without establishing the state $m(2)$.
This, in turn, triggers a protocol anomaly when a follower in state $(1)$ progresses to round $r+2$.
\end{proof}

\subsection{Efficient Stable Parity Protocol -- Proof of Theorem \ref{th:parity}}
\label{s:fullparity}

We provide here a proof of Theorem \ref{th:parity}, 
showing that there exists efficient stable protocol for the parity problem using $O(\log^3 n)$ states and silently stabilising in expected $O(\log^3 n)$ time.

\begin{proof}[Proof of Theorem \ref{th:parity}]
 We can use the scheme described in $\Call{Parity}{}$ protocol (Algorithm~\ref{alg:parity}) to combine the MC+ protocols for leader election, majority, and weight creation, obtaining an efficient Monte Carlo protocol for parity.
 By Theorem \ref{th:combine_mc+}, this combination is proved to be an MC+ protocol for parity.
 In order to implement this combination, the global state component contains a bit that specifies whether the agent belongs to a weight that contributes to set $Y$.
 This bit is turned on or off by the result of the majority operation stored in its final local state.
 The number of states used by this process is $O(\log^3 n)$.
 This can be shown as follows.
 The leader in each of the MC+ protocols needs a fixed number of states except for the clock component, which requires $O(\log^3 n)$ states.
 Non-leaders need $O(\log^2 n)$ states in the clock component. Those that belong to the weights use $O(\log n)$ states in the global component to determine the weight number and a constant number of states in the local component. Non-leaders that are not in the weights have a constant number of states in the global component and a logarithmic number in the local component.

 By Theorem \ref{th:stable+mc+}, the efficient MC+ parity protocol can be combined with a slow parity protocol, resulting in an efficient stable parity protocol.
 This protocol is silent because the slow parity protocol is silent.
 The total number of rounds of the MC+ parity protocol is $R=O(\log^2 n)$. 
 By Lemma \ref{l:sparity}, the slow parity protocol requires polynomial time. 
 By Theorem \ref{th:stable+mc+}, the expected running time of the efficient stable parity protocol is $O(\log^3 n)$.
 The total number of states in the combined parity protocol is equal to the sum of their numbers of states, i.e., \ $O(\log^3 n)$.
 
\end{proof}

\section{The Congruence Problem}\label{s:results}

In this section, we show how our solution to the parity problem naturally extends to deciding congruences modulo any integer $m$.
The main challenge is constructing a polynomial-time silent protocol for the general congruence problem. We chose not to adapt the succinct congruence protocol from Czerner et al.~\cite{DBLP:journals/jcss/CzernerGHE24}, as it remains unclear whether this protocol is silent.
Our fast congruence protocol follows the same structure as in the parity case.




\subsection{Slow, Stable and Silent Congruence Protocol}

In this subsection, we address the {\em $m$-congruence problem}, which involves determining whether the size of a distinguished subpopulation $X$ of agents is congruent to zero modulo a fixed integer $m>1$. 
We now outline a slow but always stabilising protocol for the congruence problem modulo any integer $m\ge 2$. We use the fact that using the binary representation of $m$, one can define a sequence $m_0,m_1,\ldots,m_k$ such that $m_0=1,m_k=m$ and $k=O(\log m)$. This sequence satisfies for each $i$ one of the relations $m_{i+1}=2m_i$ or $m_{i+1}=m_i+1$ and will be the sequence of possible agent masses. At the beginning of the protocol, all agents from the set $X$ have mass $m_0=1$, and all other agents have mass $0$. 
If $m_{i+1}=2m_i$ and $i<k-1$, the interaction of two agents of mass $m_i$ leads to one of them acquiring mass $m_{i+1}$ and the other one acquiring mass $0$. When $i=k-1$, such an interaction reduces the mass of both agents to zero. If, in turn, $m_{i+1}=m_i+1$, then the interaction of agents with masses $m_i$ and $1$ leads to a result analogous to the one described before. A limitation of this approach is that, during execution, no agents with mass $1$ may be available when needed for the second interaction type. Therefore, we introduce another type of interaction, in which $m_{i+1}=m_i+1$ and an agent with mass $m_i$ interacts with an agent with mass $m_j\neq 1, j\le i$. In this case, the first agent acquires mass $m_{i+1}$, and the second mass $m_j-1$. 
If $m_j-1=m_{j-1}$, this interaction does not produce a mass outside the allowed set of agent masses. However, if $m_j=2m_{j-1}$, we must include mass $m_j-1$ in the allowed set, preserving a total of $O(\log m)$ distinct masses. To handle these extra masses, we introduce a rule: when an agent with mass $m_j - 1$ interacts with a mass $0$ agent, they transition to masses $m_{j-1}$ and $m_{j-1} - 1$, respectively. 
The protocol described above reduces the total mass of agents from $|X|$ to mass $\mu\in [0,m)$, such that $\mu\equiv_m |X|$. To inform all mass $0$ agents of the protocol’s outcome, any agent with non-zero mass interacting with a mass $0$ agent sets the latter to state $f$, indicating that the congruence does not hold. Finally, in interactions that result in a mass reduction, agents with mass $0$ receive a special state $T$. As long as they do not encounter agents with non-zero mass, they set the states of agents without mass to the state $t$, indicating that the congruence holds.

A formal description of the  slow stable population congruence protocol $\Call{SlowCongruence}{}$ follows.
This protocol uses $O(\log m)$ states and a crude analysis guarantees that it stabilises in expected time $O(n^2)$.
This protocol is used as a backup for the fast MC+ congruence protocol. 

One can verify using binary encoding of $m$ that there exists a sequence $m_0,\ldots,m_k$ such that:
\begin{itemize}
    \item $m_0=1$, $m_k=m$,
    \item $k=O(\log m),$
    \item for any $i\in[1,k],$ either $m_{i+1}=2m_i$ or $m_{i+1}=m_i+1$.
\end{itemize}

The protocol defines the following states, for which we provide an intuitive description:
\begin{itemize}
    \item $(0,T)$: a strong opinion that the congruence relation is satisfied, 
    \item $(0,t)$: a weak opinion that the congruence relation is satisfied, 
    \item $(0,f)$: a weak opinion that the congruence relation is not satisfied,
    \item $(m_i,F)$ for each $i\in[0,k]$: the state carrying the \emph{mass} $m_i$ with a strong opinion that the congruence is not satisfied.
\end{itemize}

For each $i\le k,$ s.t., $m_i\neq m_{i-1}+1$, we add the state $(m_i-1,F)$.
In the \emph{initial} configuration:
\begin{itemize}
    \item all elements of $X$ are in  state $ (1,F)$,
    \item all other agents are in  state $(0,t)$ 
\end{itemize}

Next, we specify the transitions:
$$\begin{array}{lllll}
    (10) & (m_i,F) + (m_i,F) & \rightarrow & (m_{i+1},F) + (0, f) & \text{when } m_{i+1}=2m_i, i+1<k\\
    (20) & (m_i,F) + (1,F) & \rightarrow &(m_{i+1},F) + (0, f) & \text{when } m_{i+1}=m_i+1, i+1<k\\
    (30) & (m_{k-1},F) + (m_{k-1},F)& \rightarrow & (0,T)+(0,T) & \text{when } m_k=2m_{k-1}\\
    (40) & (m_{k-1},F)+(1,F) & \rightarrow & (0,T)+(0,T) & \text{when } m_k=m_{k-1}+1\\
    (50) & (m_i,F)+(m_j,F) & \rightarrow & (m_{i+1},F)+ (m_j-1,F) & \text{when }m_{i+1}=m_i+1, 0<j\le i\\
    (60) & (m_{i+1}-1,F) + (0, x) & \rightarrow & (m_i-1,F)+(m_i,F) & \text{when } m_{i+1}=2m_i, \\
    & & & & x\in\{T,t,f\} \text{ and }i>0\\
    (70) & (m_{k-1},F) + (m_j,F) & \rightarrow & (0,f)+(m_j-1,F) & \text{when } m_k=m_{k-1}+1, j>0\\
    (80) & (x,F) + (0,T)& \rightarrow & (x,F)+(0,f) & \text{for arbitrary } x\\
    (85) & (x,F) + (0,t)& \rightarrow & (x,F)+(0,f) & \text{for arbitrary } x\\
    
    (90) & (0,T) + (0,f) & \rightarrow & (0,T) + (0,t)
\end{array}$$

Let the \emph{mass} of an agent be defined as the value of the first coordinate of its state. A state is said to be \emph{carrying mass} if its mass is greater than zero. Furthermore, the \emph{mass} of a configuration is equal to the sum of the individual masses of all agents. The definitions of the initial configuration and the transition function lead to the following fact.

\begin{fact}\mbox{}

    \begin{enumerate}
        \item The mass of the initial configuration is equal to $|X|$.

        \item The total mass cannot increase.

        \item In one transition, the mass can either remain unchanged or decrease by $m$.

        \item The total mass cannot become negative.
    \end{enumerate}
\end{fact}

We divide the states of the form $(x,F)$ into 
\emph{main} states and \emph{auxiliary} states.
We say that a state $(m_i,F)$ for $i\in[1,k]$ is a \emph{main} state. Each state $(x,F)$ that is not main is called an \emph{auxiliary} state.
Thus auxiliary states 
have form $(m_i-1,F)$, where $m_{i}\neq m_{i-1}+1$.

Lemma~\ref{l:mass_reduction} is instrumental in establishing the time efficiency of the $\Call{SlowCongruence}{}$ protocol.

\begin{lemma}\label{l:mass_reduction}
If the total mass of the population is at least $m$, $\Call{SlowCongruence}{}$ protocol reduces this mass by $m$ in the expected time $O(mn)$.
\end{lemma}

\begin{proof}
We call a configuration \emph{$i$-configuration} when $i$ is the highest index for which there is an agent with the main state $(m_i,F)$ or the auxiliary state $(m_{i+1}-1,F)$.
We prove the following claim.

\begin{claim}\label{c:j_conf}
The expected time to reach for the first time $j$-configuration, when we start from $i$-configuration for $i< j$ is at most $1.5(m_j-1)n$.
\end{claim}

\begin{proof}[Proof of Claim]
We demonstrate our claim by induction on $j$. Consider the first appearance of a $j$-configuration with no main states $(m_j,F)$. After time of at most $n/2$, an agent in auxiliary state $(m_{j+1}-1,F)$ adopts state $(m_j,F)$ during interaction with an agent in state $(0,x)$. Two cases occur. 

In the first case, where $m_{j+1}=m_j+1$, the state transition to $(m_{j+1},F)$ occurs as soon as an agent in state $({m_j,F)}$ interacts with another agent in main state.
If there is no agent in the main state, it is created after the first interaction of an auxiliary state and state $(0,x)$, and this will take on average time at most $n/2$. Subsequently, there will be at least one agent in the main state other than the one in state $(m_j,F)$. On average time at most $n/2$, an interaction between these agents will occur, which results in the adoption of state $(m_{j+1},F)$ by some agent. Thus, the total time until $(j+1)$-configuration appears is at most
\[1.5(m_j-1)n+n/2+n/2+n/2 =1.5(m_{j+1}-1)n.
\]

In the second case, where $m_{j+1}=2m_j$, for as long as there is exactly one agent in state $(m_j,F)$ in the $j$-configuration, this agent will not participate in any meaningful interactions. Therefore, we can consider a configuration of all agents, excluding a single agent in state $(m_j,F)$. By inductive assumption, this configuration becomes a $j$-configuration after an average time of at most $1.5(m_j-1)n$.
And when this happens, it takes an average time at most $n/2$, for a second agent to appear in state $(m_j,F)$. Interaction of two agents in this state causes one of them to adopt state $(m_{j+1},F)$ after an average time of at most $n/2$.
Therefore, in this case, the transition to the $(j+1)$-configuration takes an average time of at most
\[1.5(m_j-1)n+1.5(m_j-1)n+n/2+n/2+n/2=1.5(m_{j+1}-1)n.
\]
\end{proof}

The last step of the proof is estimating the time in which a mass reduction by transition (30),(40), or (70) occurs when we start from $(k-1)$-configuration.
This can be handled analogously to the two previously considered cases, distinguished by whether $m_k=m_{k-1}+1$ or $m_k=2m_{k-1}$. This concludes the proof of Lemma~\ref{l:mass_reduction}, i.e., the mass reduction is obtained on average after time at most $1.5(m-1)n$ starting from an arbitrary configuration.
\end{proof}

There are two possible outcomes of the congruence computation. In the first, congruence holds, and the mass of the configuration is eventually reduced to zero. In the second, the total mass is reduced to an integer value $\mu\in(0,m)$, invalidating the congruence. 
The following lemma holds.

\begin{lemma}\label{l:finalm}
 Assume that at some point the total mass $\mu\in(0,m)$. Then, all states carrying mass are distinct main states $(m_i,F)$ and, except for the state with the smallest mass, all have masses $m_i,$ where $m_{i+1}=2m_i$.
\end{lemma}

\begin{proof}
 Let $i$ be the greatest index such that $m_i\le\mu$. By Claim \ref{c:j_conf}, the state $(m_i,F)$ is formed after average time at most $1.5(m_i-1)n+n/2$. 
 Note that if $\mu=m_i$, the agent in state $(m_i,F)$ will eventually accumulate the entire mass of $\mu$.

 If, in turn, $m_i<\mu$, then $m_{i+1}=2m_i$ must hold. After formation of the state $(m_i,F)$, the remaining mass of agents is $\mu'=\mu-m_i<m_i$.

 One can observe that the agents contributing to the mass $\mu'$ will not have any meaningful interactions with the agent in state $(m_i,F)$. Therefore, in subsequent interactions, they will form a set of agents with masses $(m_j,F)$ as in the body of the lemma. This will take the total time
 \[\sum_j (1.5(m_j-1)n+n/2)=O(mn).
 \]
\end{proof}

\begin{restatable}{theorem}{thslowc}\label{th:slow:alg}
The $\Call{SlowCongruence}{}$ protocol solves the $m$-congruence problem using $O(\log m)$ states and stabilises to a silent configuration in expected time $O(n^2)$.
\end{restatable}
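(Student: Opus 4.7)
The plan is to establish four claims about $\Call{SlowCongruence}{}$: (a) the state space has size $O(\log m)$; (b) the output correctly decides whether $m\mid|X|$; (c) the expected stabilisation time is $O(n^2)$; and (d) every reachable halting configuration is silent. Claim (a) is immediate from the listed state set: three opinion states $(0,T),(0,t),(0,f)$, one main state $(m_i,F)$ per $i\in[0,k]$, and at most one auxiliary state $(m_i-1,F)$ per $i$, which is $O(\log m)$ since $k=O(\log m)$.

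For (b) and (d), I would first verify by direct inspection of rules (10)--(70) that each F+F transition either conserves total mass or subtracts exactly $m$, with the mass reductions occurring only via (30), (40), (70). Hence the total mass remains congruent to $|X|$ modulo $m$ throughout the execution, and I split into two cases. If $m\mid|X|$, iterating Lemma \ref{l:mass_reduction} drives the mass to zero, and the final mass-reducing interaction must execute (30) or (40), producing at least one $(0,T)$ agent. With no $F$-state remaining, rules (80), (85) are disabled and (90) propagates $(0,T)$, turning every $(0,f)$ into $(0,t)$ in expected $O(n\log n)$ time by a coupon-collector argument. The resulting silent configuration, populated only by $(0,T)$ and $(0,t)$, signals congruence. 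If $m\nmid|X|$, Lemma~\ref{l:mass_reduction} and Lemma~\ref{l:finalm} together drive the mass to a fixed value $\mu\in(0,m)$, and the surviving mass-carrying agents end up in distinct main states whose indices satisfy $m_{i+1}=2m_i$ above the smallest. A rule-by-rule check against this structure shows that no F+F transition remains enabled; rules (80), (85) then destroy every $(0,T)$ and $(0,t)$, yielding a silent configuration of $F$-states mixed with $(0,f)$ that correctly signals non-congruence.

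For (c), I count at most $|X|/m\le n/m$ mass-reduction events; by Lemma \ref{l:mass_reduction} each takes expected $O(mn)$ time, totalling $O(n^2)$. The subsequent opinion-propagation phase via (80), (85), (90) contributes an additional $O(n\log n)$, which is absorbed into the $O(n^2)$ bound.

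The main obstacle I anticipate is the silence verification in the non-congruent case. Asserting that \emph{no} F+F rule can fire once the mass settles at $\mu\in(0,m)$ requires carefully invoking the structural guarantee of Lemma \ref{l:finalm} and checking each of the seven rule schemas against the constraint that every index present in the final configuration, apart from the smallest, satisfies $m_{i+1}=2m_i$: (10) is excluded because the surviving main states are distinct, (20), (50), (70) require an adjacency $m_{i+1}=m_i+1$ absent by the lemma at the relevant indices, and (60) requires an auxiliary state that must first be verified to be absent. A secondary subtlety is precisely this absence: I need to argue that every auxiliary $(m_i-1,F)$ created along the way is eventually consumed by rule (60), so that Lemma \ref{l:finalm}'s picture of a population of distinct main states is truly reached, not merely approached in expectation.
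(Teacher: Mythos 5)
Your proposal is correct and follows essentially the same route as the paper: the $O(n^2)$ bound from $\lfloor n/m\rfloor$ applications of Lemma~\ref{l:mass_reduction}, the structure of the residual mass-carrying agents from Lemma~\ref{l:finalm}, and a coupon-collector argument for the final opinion propagation. Your added detail on the mass invariant modulo $m$ and the rule-by-rule silence check in the non-congruent case makes explicit what the paper leaves implicit in its cited lemmas and the accompanying Fact.
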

\begin{proof}
The mass of the initial configuration is reduced to $\mu\in[0,m)$ through $\lfloor n/m\rfloor$ reductions, each requiring, by Lemma \ref{l:mass_reduction}, expected time $O(mn)$.
Hence, the total time of mass reduction is $O(mn\cdot \lfloor n/m \rfloor)= O(n^2)$.
Then the carrying mass agents reach their final configuration in time $O(mn)$, as established in Lemma \ref{l:finalm}.
Finally, after time $O(n\log n)$, the agents with mass 0 reach their final states, which can be analysed similarly to the Coupon Collector's Problem.
In conclusion, the total stabilisation time required to reach a silent configuration is $O(n^2)$.
The number of states is $O(\log m)$, which follows directly from the definition of the protocol.
\end{proof}

\begin{algorithm}[H]
\begin{algorithmic}[1]
\Require{A set of agents with a single leader}
\Ensure{$m$-Weights created among non-leaders}
\Function{$m$-WeightCreation}{$X$}

\State{The leader adopts state $\lambda$ and non-leaders adopt state $\phi$.}
\For{round $r\gets 0$ \textbf{to}  $LOG+2$} 

\State{\textbf{upon} entering round $r$ interacting with an agent in state different than $m(2)$:}
\indLineIfElse{$r=0$}{\State{\hspace{\algorithmicindent}\hspace{\algorithmicindent} the leader adopts state $\lambda_0$ and non-leaders adopt state $\phi$.}}
  {\State{\hspace{\algorithmicindent}\hspace{\algorithmicindent} the leader changes state $\lambda$ to state $\lambda_r$ (with mass $m_{k+r} = m2^r$),}
  \State{\hspace{\algorithmicindent}\hspace{\algorithmicindent} a follower changes local state $w_{r-2}(0)$ to global state $w_{r-2}$ and local $w$,}
  \State{\hspace{\algorithmicindent}\hspace{\algorithmicindent} the leader changes state $\lambda_{r-1}$ to state $m(1)$,}
  \State{\hspace{\algorithmicindent}\hspace{\algorithmicindent} a follower changes state $w_{r-1}(i)$ to state $m(1)$, when $i>0$,}  
  \State{\hspace{\algorithmicindent}\hspace{\algorithmicindent} the leader changes state $m(1)$ to state $m(2)$,}
  \State{\hspace{\algorithmicindent}\hspace{\algorithmicindent} a follower in state $m(1)$ \textbf{signals}\ \textsc{Anomaly}.}}  
  
  \State{\textbf{upon} entering round $r$ interacting with an agent in state $m(2)$:}
  \State{\hspace{\algorithmicindent} a follower adopts state $m(2)$.}

  \DLineIf{interacting agents are in round $r$}{
   \hspace{\widthof{ }}The leader transfers mass $m_{k+r}$ to the first free agent it encounters:
  \State{\hspace{\algorithmicindent}\hspace{\algorithmicindent} $\lambda_r + \phi \rightarrow \lambda + w_r(k+r)$}\\
  \hspace{\algorithmicindent}\hspace{\algorithmicindent}\hspace{\algorithmicindent}\hspace{\widthof{\textbf{if}}}Distribute the mass by repeating for $i>0$: 
  \indindLineIfElse{$m_i=2m_{i-1}$}{$w_r( i) + \phi \rightarrow w_r( i-1) + w_r( i-1)$} {$w_r( i) + \phi \rightarrow w_r( i-1) + w_r(0)$}\\
  \hspace{\algorithmicindent}\hspace{\algorithmicindent}\hspace{\algorithmicindent}\hspace{\widthof{\textbf{if}}}Epidemic of state $m(1)$: $m(1) + \text{\emph{any state}} \rightarrow m(1)+m(1)$.}

\EndFor
\State{Compare $\bigcup W_k$ and $X$ using the $\Call{Majority}{}$ protocol}
\State{(elements of $\bigcup W_k$ have global states $w_k$)}
\LineIf{$|\bigcup W_k|<|X|$}{\textbf{signal}\ \textsc{Anomaly}}

\EndFunction
\end{algorithmic}
\caption{$\Call{\rm\emph{m}-WeightCreation}{}$ Protocol}\label{alg:m-weights}
\end{algorithm}

\subsection{Fast Congruence Protocol}

The fast, silently stabilising congruence protocol can be constructed analogously to the fast parity protocol. 
 The only remaining component of $m$-$\Call{Congruence}{}$ protocol (Algorithm~\ref{alg:congruence}) yet to be defined is the mechanism used to generate weights with masses $m,2m,4m,8m,\ldots$, referred to as $m$-$\Call{WeightCreation}{}$ (Algorithm~\ref{alg:m-weights}).
 It differs from $\Call{WeightCreation}{}$ protocol only in a minor detail.
To enable this new subprotocol, we consider the sequence of numbers $1=m_0,m_1,m_2,\ldots,m_k=m$, which is the same as in the $\Call{SlowCongruence}{}$ protocol, and further extend this sequence with values $m_{k+i}=m2^i$, for $i>0$. And, for any integer $i>0$, we still get either $m_{i}=2m_{i-1}$ or $m_{i}=m_{i-1}+1$. 

We conclude with a theorem whose proof closely parallels that of Theorem~\ref{th:parity}.

\begin{theorem}\label{th:congruence}
There exists a population protocol for the $m$-congruence problem that uses $O(\log^3 n)$ states and silently stabilises in expected time $O(\log^3 n)$.
\end{theorem}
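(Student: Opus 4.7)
The plan is to mirror the proof of Theorem~\ref{th:parity} line by line, swapping each parity-specific building block for its $m$-congruence counterpart. Concretely, I would run, in sequence, $\Call{LeaderElection}{}$, then $m$-$\Call{WeightCreation}{}$ (Algorithm~\ref{alg:m-weights}), and finally the greedy-subtraction loop of $m$-$\Call{Congruence}{}$ driven by $\Call{Majority}{}$. Since every weight $W_r$ now carries mass $m\cdot 2^r$, any subpopulation $Y$ built from a subset of these weights has $|Y|$ divisible by $m$. The greedy descent $r = Max(r)-1,\ldots,0$ that adds $W_r$ to $Y$ whenever $|Y|+|W_r|\le |X|$ therefore terminates with $|X|-|Y|=|X|\bmod m$, so the protocol answers \textbf{true} exactly when $|X|=|Y|$, i.e., when $|X|\equiv 0\pmod{m}$. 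Correctness of each individual comparison follows from Theorem~\ref{majority_mc+}.

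The main obstacle is establishing the $m$-congruence analog of Theorem~\ref{weights_mc+}, i.e., that $m$-$\Call{WeightCreation}{}$ is an MC+ protocol. The distribution tree rooted at mass $m_{k+r}$ is no longer perfectly balanced: at each internal node labelled $m_i$ with $i\le k$, the rule is either $w_r(i)+\phi\to w_r(i-1)+w_r(i-1)$ when $m_i=2m_{i-1}$ or $w_r(i)+\phi\to w_r(i-1)+w_r(0)$ when $m_i=m_{i-1}+1$; for $i>k$ only the balanced doubling appears. Because the mixed rules apply over at most $k=O(\log m)$ levels and the doubling portion contributes $r$ further levels, the tree has depth $O(r+\log m)=O(\log n)$ under the standing assumption that $m$ is polynomially bounded in $n$, with exactly $m_{k+r}$ leaves. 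The argument in Lemma~\ref{weight_logn} then transfers verbatim: for any fixed root-to-leaf path, each successive transition is enabled with probability $\Omega(1/n)$ as long as $\Theta(n)$ agents remain in state $\phi$, so $c\log n$ interactions suffice whp by a Chernoff bound, and a union bound over the at most $n$ paths keeps the overall failure probability negligible. The anomaly-detection machinery ($max(r)/Max(r)$ epidemic waves and the concluding $\Call{Majority}{}$ check between $|W_{Max(r)-1}|$ and $|X|$) is left unchanged.

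Once $m$-$\Call{WeightCreation}{}$ is certified as an MC+ protocol, Theorem~\ref{th:combine_mc+} bundles the full pipeline---one leader-election phase, one weight-creation phase, and $O(\log n)$ exact-majority comparisons---into a single MC+ protocol operating in $R=O(\log^2 n)$ rounds. Applying Theorem~\ref{th:stable+mc+} with the slow, stable, silent $\Call{SlowCongruence}{}$ protocol of Theorem~\ref{th:slow:alg} as fallback then yields a stable, silent $m$-congruence protocol whose expected running time is dominated by the clock completion time $O(R\log n)=O(\log^3 n)$, since the $n^a$ fallback time is multiplied by the negligible anomaly probability. The state count parallels the parity case exactly: the clock contributes $O(\log^3 n)$ states, weight indexing $O(\log n)$ global states, each subprotocol only $O(1)$ additional local states beyond what is inherited from parity, and $\Call{SlowCongruence}{}$ another $O(\log m)=O(\log n)$ states, giving the claimed $O(\log^3 n)$ total.
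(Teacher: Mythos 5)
Your proposal is correct and follows essentially the same route as the paper, which itself only remarks that the proof ``closely parallels that of Theorem~\ref{th:parity}'' after swapping in $m$-$\Call{WeightCreation}{}$ and $\Call{SlowCongruence}{}$; you in fact supply more detail than the paper does, notably the observation that the now-unbalanced distribution tree still has depth $O(\log m + \log n)$ so that the path-plus-union-bound argument of Lemma~\ref{weight_logn} carries over. No gaps.
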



\begin{algorithm}[H]
\begin{algorithmic}[1]
\Require{~$X$ a set of agents}
\Ensure{\textsc{True} if $|X|$ is divisible by $m$, otherwise \textsc{False}}
\Function{$m$-Congruence}{$X$}
\State \Call{LeaderElection}{\/}
\State \Call{$m$-WeightCreation}{X}
\State $Y \gets \emptyset$
\For{$l\gets LOG$ \textbf{downto} $0$} 
  \State{Add weight $m\cdot 2^l$ to $Y$}
  \State{Compare $X$ and $Y$ using the $\Call{Majority}{}$ protocol}
  \LineIf{$|X|<|Y|$}{Remove weight $m\cdot 2^l$ from $Y$}
\EndFor
\LineIfElse{$|X|=|Y|$}{\Return \textsc{True}} {\Return \textsc{False}}
\EndFunction
\end{algorithmic}
\caption{$\Call{\rm\emph{m}-Congruence}{}$ Protocol}\label{alg:congruence}
\end{algorithm}

\section{Conclusions}\label{s:conclusion}


In this work, we introduce a modular framework for the design of efficient and stable congruence protocols. The framework is centred around a novel counting mechanism that integrates {\em population weights} with {\em majority computation} within the proposed MC+ computational paradigm. Our approach builds upon prior developments in clocking, anomaly detection, and switching mechanisms, which together ensure a reliable fallback to stable and silent protocols. Within this framework, we obtain the first efficient stable protocols for both parity and congruence predicates, which are moreover silent. These protocols require $O(\log^3 n)$ states and stabilise in $O(\log^3 n)$ time, both in expectation and with high probability, thereby addressing an open problem posed in~\cite{DBLP:journals/jcss/CzernerGHE24}. A key feature of our construction is the expressive power of the weight system, which enables an implicit conversion between unary and binary representations. Leveraging this capability, the parity protocol extends naturally to the computation and binary output of more general congruence functions, as well as to the exact counting of designated subpopulations or the entire population. The resulting value is represented in binary across a dedicated subset of $O(\log n)$ agents, each encoding a consecutive bit, thus resolving a further open problem raised in~\cite{DOTY202191}.

\bibliographystyle{plain}
\bibliography{parity-bibliography}

\end{document}